\documentclass{article}
\usepackage[latin1]{inputenc}
\usepackage[height=9in,width=6.0in]{geometry}

\usepackage{amsmath,amsthm,amssymb,bm}
\usepackage[sort&compress]{natbib}
\usepackage{graphicx,mathrsfs}
\usepackage{paralist}
\usepackage{enumerate}
\usepackage{stmaryrd}
\usepackage{fancyhdr}
\usepackage{colortbl}

\newcommand{\mx}[1]{{\mathsf{#1}}}
\newcommand{\bmx}[1]{\overline{\mx{#1}}}
\newcommand{\tp}[2]{\left({#1}\right) \cdot {#2}}

\newcommand{\p}{\bm{P}}
\newcommand{\q}{\bm{Q}}
\newcommand{\s}{\bm{S}}
\newcommand{\rc}{\bm{R}}
\newcommand{\id}{\bm{I}}
\newcommand{\f}{\bm{F}}
\newcommand{\rr}{\bm{R}}

\newcommand{\sml}[1]{{\small #1}}

\newcommand{\frob}[1]{\| {#1} \|_\text{F}}

\newcommand{\C}{\mathcal{C}}
\newcommand{\mean}{\bm{\mu}}

\newcommand{\half}{\tfrac{1}{2}}

\newcommand{\x}{\bm{x}}

\newcommand{\ip}[2]{\left \langle {#1}, {#2} \right\rangle}

\newcommand{\set}[1]{\left\{ {#1} \right\}}

\newcommand{\breg}{B_f}

\newcommand{\jopt}{J_{\text{OPT}}}

\newcommand{\reals}{\mathbb{R}}

\newcommand{\E}{E}

\newcommand{\cc}[1]{\multicolumn{1}{c}{{#1}}}
\newcommand{\y}{\bm{y}}
\newcommand{\dist}{d}
\newcommand{\cotec}{CoTeC}

\definecolor{lightgray}{gray}{.80}

\setlength{\fboxsep}{0pt}
\newcommand{\gc}[1]{\resizebox{35pt}{5.5pt}{\raisebox{-1pt}{\colorbox{lightgray}{{#1}}}}}
\newcommand{\hc}[2]{\gc{${#1} \cdot 10^{#2}$}}

\DeclareMathOperator*{\argmin}{argmin}
\DeclareMathOperator*{\argmax}{argmax}

\numberwithin{equation}{section}
\bibpunct{[}{]}{,}{n}{,}{,}

\newtheorem{theorem}{Theorem}[section]
\newtheorem{lemma}[theorem]{Lemma}

\newtheorem{proposition}[theorem]{Proposition}

\theoremstyle{definition}
\newtheorem{example}[theorem]{Example}

\numberwithin{equation}{section}
\bibpunct{[}{]}{,}{n}{,}{,}

\title{Approximation Algorithms for Bregman Co-clustering and Tensor
  Clustering\thanks{A part of the theory of this paper appeared in  MPI-TR-\#177~\citep{kmeanstr}.}}

\author{
Stefanie Jegelka\\
MPI for Biological Cybernetics\\
72070 Tübingen, Germany
\and
Suvrit Sra\\
MPI for Biological Cybernetics\\
72070 Tübingen, Germany
\and
Arindam Banerjee\\
Univ. of Minnesota\\
MN 55455, USA
}

\date{}
 
\begin{document} 
\pagestyle{fancy}
\maketitle 
\hrule

\begin{abstract} 
  In the past few years powerful generalizations to the Euclidean k-means
  problem have been made, such as Bregman clustering~\citep{banerjee},
  co-clustering (i.e., simultaneous clustering of rows and columns of an input
  matrix)~\citep{itcc,coclust}, and tensor
  clustering~\citep{arinmulti,shashua}. Like k-means, these more general
  problems also suffer from the NP-hardness of the associated
  optimization. Researchers have developed approximation algorithms of varying
  degrees of sophistication for k-means, k-medians, and more recently also for
  Bregman clustering~\citep{ackerman}. However, there seem to be no
  approximation algorithms for Bregman co- and tensor clustering. In this
  paper we derive the first (to our knowledge) guaranteed methods for these
  increasingly important clustering settings. Going beyond Bregman
  divergences, we also prove an approximation factor for tensor clustering
  with arbitrary separable metrics. Through extensive experiments we
  evaluate the characteristics of our method, and show that it also has practical impact.
\end{abstract} 

\section{Introduction}
Partitioning data points into clusters is a fundamentally hard problem. The
well-known Euclidean k-means problem that partitions the input data points
(vectors in $\reals^d$) into $K$ clusters while minimizing sums of their
squared distances to corresponding cluster centroids, is an NP hard
problem~\citep{drineas} (exponential in $d$). However, simple and frequently
used procedures that rapidly obtain local minima exist since a long
time~\citep{lloyd,hartigan}.

Because of its wide applicability and importance, the Euclidean k-means
problem has been generalized in several directions. Specific
examples relevant to this paper include:
\begin{itemize}
  \setlength{\itemsep}{-1pt}
\item \emph{Bregman clustering}~\citep{banerjee}, where instead of minimizing squared
  Euclidean distances one minimizes Bregman divergences (which are generalized
  distance functions, see~(\ref{eq:4}) or~\citep{censor:zenios} for details),
\item \emph{Bregman co-clustering}~\citep{coclust} (which includes both
  Euclidean~\citep{suv.coclus} and information-theoretic
  co-clustering~\citep{itcc} as special cases), where the set of input
  vectors is viewed as a matrix and one \emph{simultaneously} clusters rows
  and columns to obtain coherent submatrices (co-clusters), while minimizing
  a Bregman divergence, and
\item \emph{Tensor clustering} or multiway clustering~\citep{shashua},
  especially the version based on Bregman divergences~\citep{arinmulti}, where
  one simultaneously clusters along various dimensions of the input tensor.
\end{itemize}
For these problems too, the commonly used heuristics perform well, but do not
provide theoretical guarantees (or at best assure local optimality). For
k-means type clustering problems---i.e., problems that group together input
vectors into clusters while minimizing ``distance'' to cluster
centroids---there exist several algorithms that approximate a globally optimal
solution. We refer the reader to~\citep{dasv07,kumar,ackerman,acker2}, and the
numerous references therein for more details.

In stark contrast, approximation algorithms for tensor clustering are much less
studied. We are aware of only two very recent attempts (both papers are from
2008) for the two-dimensional special case of co-clustering,
namely,~\citep{coclus} and \citep{coclus1}---and both of the papers 
follow similar approaches to obtain their approximation guarantees. Both prove
a $2\alpha_1$-approximation for Euclidean co-clustering, \citet{coclus1} an
additional factor of $(1 + \sqrt{2})$ for binary matrices and an $\ell_1$ norm
objective, and \citet{coclus} a factor of $3\alpha_1$ for co-clustering real
matrices with $\ell_p$ norms. In all factors $\alpha_1$ is an approximation guarantee
for clustering either rows or columns.  In this paper, we build
upon~\citep{coclus} and obtain approximation algorithms for tensor
clustering with Bregman divergences and arbitrary separable metrics such as
$\ell_p$-norms.  The latter result is of
particular interest for $\ell_1$-norm based tensor clustering, which may be
viewed as a generalization of k-medians to tensors.
In the terminology
of~\citep{banerjee}, we focus on the ``block average'' versions of co- and
tensor clustering.

Additional discussion and relevant references for co-clustering can be found
in~\citep{coclust}, while for the lesser known problem of tensor clustering
more background can be gained by referring
to~\citep{tensor1,tensor2,tensor3,tensor4,arinmulti,shashua}.

\subsection{Contributions}
\vspace{-5pt}%
The main contribution of this paper is the analysis of an approximation
algorithm for tensor clustering that achieves an approximation ratio of
$O(m\alpha)$, where $m$ is the order of the tensor
and $\alpha$ is the approximation factor of a
corresponding 1D clustering algorithm. Our results apply to a fairly broad
class of objective functions, including metrics such as $\ell_p$ norms or
Hilbertian metrics~\cite{scsm01,hebo05}, and divergence functions such as
Bregman divergences~\cite{censor:zenios} (with some assumptions). As
corollaries, our results solve two open problems posed by~\cite{coclus},~viz.,
whether their methods for Euclidean co-clustering could be extended to Bregman
co-clustering, and if one could extend the approximation guarantees to tensor
clustering. Owing to the structure of the algorithm, our results also
give insight into proprties of the tensor clustering problem as such,
namely, a bound on the amount of information inherent in the
joint consideration of several dimensions.

In addition, we provide extensive experimental validation of the
theoretical claims, which forms an additional contribution of this
paper.

\section{Background}
Traditionally, ``center'' based clustering algorithms seek partitions of
columns of an input matrix $\mx{X} = [\x_1,\ldots,\x_n]$ into clusters $\C =
\set{\C_1,\ldots,\C_K}$, and find ``centers'' $\mean_k$ that minimize the objective
\begin{equation}
  \label{eq:5}
  J(\C) = \sum\nolimits_{k=1}^K\sum\nolimits_{\x \in \C_k} d(\x, \mean_k),
\end{equation}
where the function $d(\x,\y)$ measures cluster quality. The ``center''
$\mean_k$ of cluster $\C_k$ is given by the mean of the points in $\C_k$ when
$d(\x,\y)$ is a Bregman divergence~\cite{banerjee}.  Co-clustering
extends~\eqref{eq:5} to seek simultaneous partitions (and centers
$\mean_{IJ}$) of rows and columns of $\mx{X}$, so that the objective function
\begin{equation}
  \label{eq:11}
  J(\C) = \sum\nolimits_{I,J} \sum\nolimits_{i \in I, j \in J} d(x_{ij}, \mean_{IJ}),
\end{equation}
is minimized; $\mean_{IJ}$ denotes the (scalar) ``center'' of the cluster
described by the row and column index sets, viz., $I$ and
$J$. Formulation~\eqref{eq:11} is easily generalized to tensors, as shown in
Section~\ref{sec:probl} below. However, we first recall basic notation about
tensors before formally presenting the tensor clustering problem.
Tensors are well-studied in multilinear algebra~\citep{greub}, and
they are gaining importance in both data mining and machine learning
applications~\citep{kddwkshp,jimeng}.

\subsection{Tensors}
A large part of the material in this section is derived from the well-written
paper of~\citet{leklim}---their notation turns out to be particularly suitable
for our analysis. An order-$m$ tensor $\mx{A}$ may be viewed as an element of
the vector space $\reals^{{n_1} × \ldots × n_m}$ (in this paper we denote matrices
and tensors using sans-serif letters). An individual component of the tensor
$\mx A$ is represented by the multiply-indexed value $a_{i_1i_2\ldots i_m}$, where
$i_j \in \set{1,\dots,n_j}$ for $1 \leq j \leq m$.

\subsubsection*{Multilinear matrix multiplication} For us the most important
operation on tensors is that of multilinear matrix multiplication, which is
a generalization of the familiar concept of matrix multiplication. Matrices
\emph{act} on other matrices by either left or right multiplication. For an
order-3 tensor, there are three dimensions on which a matrix may act via
matrix multiplication. For example, for an order-3 tensor $\mx{A} \in
\reals^{n_1 × n_2 × n_3}$, and three matrices $\mx{P} \in
\reals^{p_1 × n_1}$, $\mx{Q} \in \reals^{p_2× n_2}$, and $\mx{R}
\in \reals^{p_3 × n_3}$, \emph{multilinear matrix multiplication} is
the operation defined by the action of these three matrices on the different
dimensions of $\mx{A}$ that yields the tensor $\mx{A}' \in \reals^{p_1
  × p_2 × p_3}$. Formally, the entries of the tensor $\mx{A}'$ are
given by
\begin{equation}
  \label{eq:1}
  a_{lmn}' = \sum\nolimits_{i,j,k=1}^{n_1,n_2,n_3} p_{li}q_{mj}r_{nk}a_{ijk},
\end{equation}
and this operation is written compactly as
\begin{equation}
  \label{eq:21}
  \mx{A}' = \bigl(\mx{P},\mx{Q},\mx{R}\bigr) \cdot \mx{A}.
\end{equation}

Multilinear multiplication extends naturally to tensors of arbitrary
order. If $\mx{A} \in \reals^{n_1× n_2×\cdots×
  n_m}$, and $\mx{P}_1 \in \reals^{p_1 × n_1}, \ldots, \mx{P}_m \in
\reals^{p_m × n_m}$, then $\mx{A}' =
( (\mx{P}_1,\ldots,\mx{P}_m) \cdot \mx{A}) \in \reals^{p_1 × \cdots ×  p_m} $ has components
\begin{equation}
  \label{eq:22}
  a_{i_1i_2\ldots i_m}' = \sum\nolimits_{j_1,\ldots,j_m=1}^{n_1,\ldots,n_m}
  p_{i_1j_1}^{(1)}\cdots p_{i_mj_m}^{(m)}a_{j_1\ldots j_m}, 
\end{equation}
where $p_{ij}^{(k)}$ denotes the $ij$-th entry of matrix $\mx{P}_k$.

\begin{example}[Matrix Multiplication] Let $\mx{A} \in \reals^{n_1 ×
    n_2}$, $\mx{P} \in \reals^{p × n_1}$, and $\mx{Q} \in \reals^{q
    × n_2}$ be three matrices. The matrix product $\mx{PAQ}^\top$ can be written as
  the multilinear multiplication $\tp{\mx{P},\mx{Q}}{\mx{A}}$.
\end{example}

\begin{proposition}[Basic Properties]
  The following properties of multilinear multiplication are easily verified
  (and generalized to tensors of arbitrary order):
  \begin{enumerate}
  \item \textbf{Linearity:} Let $\alpha$, $\beta \in \reals$, and $\mx{A}$
    and $\mx{B}$ be tensors with same dimensions, then
    \begin{equation*}
      \tp{\mx{P},\mx{Q}}{(\alpha\mx{A}+ \beta\mx{B})} =
      \alpha\tp{\mx{P},\mx{Q}}{\mx{A}} + \beta \tp{\mx{P},\mx{Q}}{\mx{B}}
    \end{equation*}
  \item \textbf{Product rule:} For matrices $\mx{P}_1$, $\mx{P}_2$,
    $\mx{Q}_1$, $\mx{Q}_2$ of appropriate dimensions, and a tensor $\mx{A}$
    \begin{equation*}
      \tp{\mx{P}_1,\mx{P}_2}{\bigl(\tp{\mx{Q}_1,\mx{Q}_2}{\mx{A}}\bigr)} = 
      \tp{\mx{P}_1\mx{Q}_1, \mx{P}_2\mx{Q}_2}{\mx{A}}
    \end{equation*}
  \item \textbf{Multilinearity:} Let $\alpha$, $\beta \in \reals$, and
    $\mx{P}$, $\mx{Q}$, and $\mx{R}$ be matrices of appropriate
    dimensions. Then, for a tensor $\mx{A}$ the following holds
    \begin{equation*}
      \tp{\mx{P}, \alpha\mx{Q}+\beta\mx{R}}{\mx{A}} =
      \alpha\tp{\mx{P},\mx{Q}}{\mx{A}} + \beta\tp{\mx{P},\mx{R}}{\mx{A}}
    \end{equation*}
  \end{enumerate}
\end{proposition}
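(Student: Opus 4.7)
The plan is to verify each of the three properties directly from the componentwise definition in equation~\eqref{eq:22}; all three statements reduce to standard manipulations of finite sums once the definition is expanded. I would state them for order-$m$ tensors in general (not only the order-$2$ case displayed in the proposition) so that the same argument covers every instance.

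For \textbf{Linearity} I would start from the definition: if $\mx{C} = \alpha\mx{A} + \beta\mx{B}$, then $c_{j_1\ldots j_m} = \alpha a_{j_1\ldots j_m} + \beta b_{j_1\ldots j_m}$. Plugging this into~\eqref{eq:22} and pulling $\alpha$ and $\beta$ out of the sum gives precisely the componentwise form of $\alpha \tp{\mx P_1,\ldots,\mx P_m}{\mx A} + \beta \tp{\mx P_1,\ldots,\mx P_m}{\mx B}$. This is just distributivity and linearity of finite summation.

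For the \textbf{Product rule} I would first write, componentwise,
\[
\bigl(\tp{\mx Q_1,\mx Q_2}{\mx A}\bigr)_{j_1 j_2} \;=\; \sum\nolimits_{k_1,k_2} q^{(1)}_{j_1 k_1} q^{(2)}_{j_2 k_2} a_{k_1 k_2},
\]
and then apply $(\mx P_1,\mx P_2)\cdot$ to this intermediate tensor using~\eqref{eq:22}. After substituting, swapping the order of the four summations (permitted since the sums are finite), and regrouping the factors as $\bigl(\sum_{j_1} p^{(1)}_{i_1 j_1} q^{(1)}_{j_1 k_1}\bigr)\bigl(\sum_{j_2} p^{(2)}_{i_2 j_2} q^{(2)}_{j_2 k_2}\bigr)$, I recognize the two inner sums as the $(i_1,k_1)$- and $(i_2,k_2)$-entries of the ordinary matrix products $\mx P_1\mx Q_1$ and $\mx P_2\mx Q_2$. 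Comparing with~\eqref{eq:22} applied to $\tp{\mx P_1 \mx Q_1,\mx P_2\mx Q_2}{\mx A}$ finishes the claim. The only subtle point here is the use of associativity and commutativity of the (finite) summation; there is no analytic difficulty.

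For \textbf{Multilinearity} I would again simply substitute: if the $i$-th matrix argument is $\alpha \mx Q + \beta \mx R$, then its $(j,k)$-entry is $\alpha q_{jk} + \beta r_{jk}$, and splitting the sum in~\eqref{eq:22} produces the two corresponding multilinear products. There is no real obstacle beyond careful bookkeeping of indices; if anything, the only mild annoyance is the heavy multi-index notation for general order-$m$ tensors, which is why I would present the argument with compact multi-index summation (e.g.\ $j = (j_1,\ldots,j_m)$) and then specialize to the displayed order-$2$ cases to match the statement.
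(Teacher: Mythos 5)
Your proof is correct and follows the route the paper intends: the paper leaves these properties as "easily verified" from the componentwise definition in~\eqref{eq:22}, and your direct substitution, reordering of finite sums, and recognition of the inner sums as entries of the matrix products $\mx{P}_1\mx{Q}_1$ and $\mx{P}_2\mx{Q}_2$ is exactly that verification. Nothing is missing.
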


\subsubsection*{Vector Norms} The standard vector $\ell_p$-norms can be easily
extended to tensors, and are defined as
\begin{equation}
  \label{eq:3}
  \|\mx{A}\|_p = \Bigl(\sum\nolimits_{i_1,\ldots,i_m} |a_{i_1\ldots i_m}|^p\Bigr)^{1/p},
\end{equation}
for $p \geq 1$. In particular for $p=2$ we get the ``Frobenius'' norm, also
written as $\frob{\mx{A}}$.

\subsubsection*{Inner Product} The Frobenius norm induces an inner-product
that can be defined as
\begin{equation}
  \label{eq:30}
  \ip{\mx{A}}{\mx{B}} = \sum\nolimits_{i_1,\ldots,i_m} a_{i_1\ldots i_m}b_{i_1\ldots i_m},
\end{equation}
so that $\frob{\mx{A}}^2 = \ip{\mx{A}}{\mx{A}}$ holds as usual. 
\begin{proposition}
  The following property of this inner product is easily verified (a
  generalization of the familiar property $\ip{\mx{Ax}}{\mx{By}} =
  \ip{\mx{x}}{\mx{A}^\top\mx{By}}$ for vectors):
\begin{equation}
  \label{eq:31}
  \begin{split}
    \ip{\tp{\mx{P}_1,\ldots,\mx{P}_m}{\mx{A}}}{\tp{\mx{Q}_1,\ldots,\mx{Q}_m}{\mx{B}}}  = \ip{\mx{A}}{\tp{\mx{P}_1^\top\mx{Q}_1,\ldots,\mx{P}_m^\top\mx{Q}_m}{\mx{B}}}.
  \end{split}
\end{equation}
\end{proposition}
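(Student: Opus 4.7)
The plan is to establish (\ref{eq:31}) by direct expansion in coordinates, using the component formula (\ref{eq:22}) for multilinear multiplication together with the coordinate definition (\ref{eq:30}) of the inner product. All sums are finite, so every manipulation below is just a legitimate reordering of a finite sum.

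First I would write out the two multilinear products in components. Setting $\mx{A}' = \tp{\mx{P}_1,\ldots,\mx{P}_m}{\mx{A}}$ and $\mx{B}' = \tp{\mx{Q}_1,\ldots,\mx{Q}_m}{\mx{B}}$, formula (\ref{eq:22}) gives $a'_{i_1\ldots i_m} = \sum_{j_1,\ldots,j_m} \bigl(\prod_{k=1}^{m} p^{(k)}_{i_k j_k}\bigr) a_{j_1 \ldots j_m}$ and analogously $b'_{i_1\ldots i_m} = \sum_{l_1,\ldots,l_m} \bigl(\prod_{k=1}^{m} q^{(k)}_{i_k l_k}\bigr) b_{l_1 \ldots l_m}$. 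Substituting into (\ref{eq:30}) expresses the LHS of (\ref{eq:31}) as a finite triple sum over the three tuples $(i_1,\ldots,i_m)$, $(j_1,\ldots,j_m)$, $(l_1,\ldots,l_m)$.

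Next I would interchange the order of summation so that the sum over $(i_1,\ldots,i_m)$ is innermost. Since each $i_k$ appears only in the two factors $p^{(k)}_{i_k j_k}$ and $q^{(k)}_{i_k l_k}$, the inner sum factors as $\prod_{k=1}^{m}\bigl(\sum_{i_k} p^{(k)}_{i_k j_k} q^{(k)}_{i_k l_k}\bigr)$, and $\sum_{i_k} p^{(k)}_{i_k j_k} q^{(k)}_{i_k l_k}$ is by definition the $(j_k,l_k)$ entry of the matrix $\mx{M}_k := \mx{P}_k^\top \mx{Q}_k$. The remaining expression is $\sum_{j_1,\ldots,j_m}\sum_{l_1,\ldots,l_m} a_{j_1\ldots j_m}\bigl(\prod_k m^{(k)}_{j_k l_k}\bigr) b_{l_1\ldots l_m}$, which by another application of (\ref{eq:22}) and (\ref{eq:30}) is exactly $\ip{\mx{A}}{\tp{\mx{M}_1,\ldots,\mx{M}_m}{\mx{B}}}$, the RHS of (\ref{eq:31}).

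There is no real obstacle here; the only care needed is bookkeeping of the $3m$ dummy indices during the interchange of summations. As a sanity check, specializing to $m=2$ with $\mx{P}_2=\mx{Q}_2=\id$ recovers the familiar vector/matrix identity quoted just before the proposition. A shorter alternative route would be to first verify the $m=1$ case (the matrix/vector identity $\langle \mx{P}\x,\mx{Q}\y\rangle = \langle \x,\mx{P}^\top\mx{Q}\y\rangle$), then apply the product rule of Proposition~2.1 together with an induction on $m$; the direct coordinate expansion above, however, is the most transparent derivation.
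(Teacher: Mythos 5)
Your proposal is correct and follows essentially the same route as the paper: expand both multilinear products in coordinates via~(\ref{eq:22}), substitute into~(\ref{eq:30}), move the sums over the outer indices innermost so they factor into the entries of $\mx{P}_k^\top\mx{Q}_k$, and reassemble the result as $\ip{\mx{A}}{\tp{\mx{P}_1^\top\mx{Q}_1,\ldots,\mx{P}_m^\top\mx{Q}_m}{\mx{B}}}$. The bookkeeping of the $3m$ dummy indices is exactly the only content of the paper's argument as well, so there is nothing to add.
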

\emph{Proof:} Using definition~(\ref{eq:22}) and the inner-product
rule~\eqref{eq:30} we have {\small
\begin{equation*}
  \begin{split}
    &\ip{\tp{\mx{P}_1,\ldots,\mx{P}_m}{\mx{A}}}{\tp{\mx{Q}_1,\ldots,\mx{Q}_m}{\mx{B}}}
    = \sum_{i_1,\ldots,i_m}\mathop{\sum_{j_1,\ldots,j_m}}_{k_1,\ldots,k_m}
    p_{i_1j_1}^{(1)}q_{i_1k_1}^{(1)}\cdots
    p_{i_mj_m}^{(m)}q_{i_mk_m}^{(m)}a_{j_1\ldots j_m}b_{k_1\ldots k_m},\\
    &= \mathop{\sum_{j_1,\ldots,j_m}}_{k_1,\ldots,k_m}
    \bigl(\sum_{i_1}p_{i_1j_1}^{(1)}q_{i_1k_1}^{(1)}\bigr)\cdots
    \bigl(\sum_{i_m}p_{i_mj_m}^{(m)}q_{i_mk_m}^{(m)}\bigr)a_{j_1\ldots
      j_m}b_{k_1\ldots k_m}\\
    &= \mathop{\sum_{j_1,\ldots,j_m}}_{k_1,\ldots,k_m}
    (\mx{P}_1^\top\mx{Q}_1)_{j_1k_1}\cdots
    (\mx{P}_m^\top\mx{Q}_m)_{j_mk_m}a_{j_1\ldots j_m}b_{k_1\ldots k_m}  =
    \sum_{j_1\ldots j_m} a_{j_1\ldots j_m} b'_{j_1\ldots j_m}  =
    \ip{\mx{A}}{\mx{B}'}, 
  \end{split}
\end{equation*}}
where $\mx{B}' =
\tp{\mx{P}_1^\top\mx{Q}_1,\ldots,\mx{P}_m^\top\mx{Q}_m}{\mx{B}}$.

\subsubsection*{Divergence}
Finally, we define an arbitrary \emph{divergence} function $\dist(\mx{X},\mx{Y})$
between two $m$-dimensional tensors $X,Y$ as an elementwise sum of individual divergences, i.e.,
\begin{equation}
  \label{eq:13}
  \dist(\mx{X}, \mx{Y}) = \sum\nolimits_{i_1,\ldots,i_m}
  \dist(x_{i_1,\ldots,i_m},y_{i_1,\ldots,i_m}),
  \vspace*{-5pt}
\end{equation}
and we will define the scalar divergence $d(x,y)$ as the need arises.

\subsection{Tensor clustering}
\label{sec:probl}
Let $\mx{A} \in \reals^{n_1 \times \cdots \times n_m}$ be an order-$m$ tensor that we wish to
partition into coherent sub-tensors (or clusters). A basic approach is to
minimize the sum of the divergences between individual (scalar) elements in
each cluster to their corresponding (scalar) cluster ``centers''. Readers
familiar with~\cite{coclust} will recognize this to be a ``block-average''
variant of tensor clustering.

Assume that each dimension $j$ (\sml{$1\leq j \leq m$}) is partitioned into $k_j$
clusters. Let $\mx{C}_j \in \{0,1\}^{n_j \times k_j}$ be the cluster indicator matrix
for dimension $j$, where the $ik$-th entry of such a matrix is one if and only
if index $i$ belongs to the $k$-th cluster (\sml{$1\leq k\leq k_j$}) for dimension
$j$. Then, the \emph{tensor clustering} problem is (\emph{cf.}~\ref{eq:11}):
\begin{equation}
  \label{eq:2}
    \underset{\mx{C}_1,\ldots,\mx{C}_m, \mx{M}}{\text{minimize}}\quad
    \dist(\mx{A},\; (\mx{C}_1, \ldots, \mx{C}_m) \cdot
    \mx{M}),\quad
    \text{s.t. } \mx{C}_j \in \set{0,1}^{n_j\times k_j},
\end{equation}
where the tensor $\mx{M}$ collects all the cluster ``centers.''

\section{Algorithm and Analysis}
\label{sec:coclust}
Given formulation~\eqref{eq:2}, our algorithm, which we name
\textbf{Co}mbination \textbf{Te}nsor \textbf{C}lustering (CoTeC), follows the
simple outline:
\vspace*{3pt}

\fbox{
  \begin{minipage}[c]{0.9\linewidth}
    \vspace*{5pt}
    \flushleft
    \begin{enumerate}
    \item Cluster along each dimension $j$, using an approximation algorithm
      to obtain clustering $\mx{C}_j$; Let $\bm{C} = (\mx{C}_1,\ldots,\mx{C}_m)$
    \item Compute $\mx{M} = \argmin_{\mx{X} \in \mathbb{R}^{k_1 \times \cdots \times k_m}} d(\mx{A}, \bm{C} \cdot \mx{X})$.
    \item Return the tensor clustering $(\mx{C}_1,\ldots,\mx{C}_m)$  (with representatives $\mx{M}$).
  \end{enumerate}
  \end{minipage}
}\\
Instead of clustering one dimension at a time, we can also cluster along $t$
dimensions at a time, which we will call $t$-\emph{dimensional clustering}.
For an order-$m$ tensor, with $t=1$ we form groups of order-$(m-1)$
tensors. For illustration, consider an order-$3$ tensor $\mx{A}$ for which we
group matrices when $t=1$. For the first dimension we cluster the objects
$\mx{A}(i,:,:)$ (using \textsc{Matlab} notation) to obtain cluster indicators
$\mx{C}_1$; we repeat the procedure for the second and third dimensions. The
approximate tensor clustering will be the combination $(\mx{C}_1,
\mx{C}_2,\mx{C}_3)$. As we assumed the Bregman divergences to be separable,
the sub-tensors, e.g., $\mx{A}(i,:,:)$ can be simply treated as vectors.

Apart from~\citep{coclus,coclus1}, all approximation guarantees refer to
one-dimensional clustering algorithms. Any one-dimensional approximation
algorithm can be used as a base method for our scheme outlined above. For
example, the method of~\citet{acker2}, or the more practical Bregman
clustering approaches
of~\citep{ecmlbreg,kmeanstr}\footnote{Both~\citep{ecmlbreg,kmeanstr}
  discovered essentially the same method for Bregman clustering, though the
  analysis of~\citep{ecmlbreg} is somewhat sharper.} are two potential
choices, though with different approximation factors. Clustering along
individual dimensions and then combining the results to obtain a tensor
clustering might seem counterintuitive to the idea of
``co''-clustering, where one \emph{simultaneously}
clusters along different dimensions.. However, our analysis will
show that
dimension-wise clustering suffices to obtain strong approximation guarantees
for tensor clustering---a fact often observed empirically
too. At the same time, our Thmeorem 1 bounds the amount of
information that can lie in the simultaneous consideration of multiple dimensions.

\subsection{Results}
The main contribution of this paper is the following approximation guarantee
for \cotec, which we prove in the remainder of this section.

\vspace{-5pt}
\begin{theorem}[Approximation]
  \label{thm:mway}
  Let $\mx{A}$ be an order-$m$ tensor and let $\C_j$ denote its clustering
  along the $j$th subset of $t$ dimensions ($1 \leq j \leq m/t$), as obtained from a
  multiway clustering algorithm with guarantee $\alpha_t$\footnote{We say an
    approximation algorithm has guarantee $\alpha$ if it yields a solution that
    achieves an objective value within a factor $O(\alpha)$ of the optimum.}. Let
  $\C = (\C_1,\ldots,\C_{m/t})$ denote the induced tensor clustering,
  and $\jopt(m)$ the best $m$-dimensional clustering. Then,
  \begin{equation}
    \label{eq:18}
    J(\C) \leq p(m/t) \rho_\dist \alpha_t\jopt(m), \quad \text{ with }
  \end{equation}
  \begin{enumerate}
  \item $\rho_\dist = 1$ and $p(m/t) = 2^{\log_2 m/t}$ if $\dist(x,y) = (x-y)^2$,
  \item $\rho_\dist = 1$ and $p(m/t) = 2 m/t$ if $\dist(x,y)$ is a metric\footnote{The results can be trivially
      extended to $\lambda$-relaxed metrics that satisfy $d(x,y) \leq \lambda(d(x,z) +
      d(z,y))$; the corresponding approximation factor just gets scaled by
      $\lambda$.}.
  \end{enumerate}
\end{theorem}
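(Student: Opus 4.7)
The plan is to reduce the cost of the combined clustering $\C$ to a controlled aggregation of the individual subset-clustering costs $J_j$, each of which is separately bounded by $\alpha_t J_{\text{OPT}}(m)$. The two cases in the theorem then correspond to two different ways of aggregating the $J_j$'s: a Pythagorean/projection argument for squared Euclidean, and a triangle-inequality/star argument for a general metric.

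First I would prove the bridging lemma that each single-subset clustering cost satisfies $J_j \leq \alpha_t J_{\text{OPT}}(m)$. For this, take an optimal $m$-dimensional clustering $(\C_1^*, \ldots, \C_{m/t}^*)$ with center tensor $\mx{M}^*$, and interpret $\C_j^*$ alone as a $t$-dimensional clustering where fibers are the slices of $\mx{A}$ with subset-$j$ indices fixed. Flattening $\mx{M}^*$ along the remaining $m{-}t$ dimensions gives a feasible choice of centers whose cost equals $J_{\text{OPT}}(m)$; the true $t$-dimensional optimum can only be smaller, and the $\alpha_t$-approximate algorithm applied to subset $j$ then yields $J_j \leq \alpha_t J_{\text{OPT}}(m)$.

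For the squared Euclidean case I would exploit that clustering along subset $k$ with optimal (centroid) centers is exactly the orthogonal projection $P_k$ onto the subspace of tensors constant on each sub-cluster of $\C_k$; since distinct subsets act on disjoint index dimensions, the $P_k$ commute. The combined clustering cost is $\|(I-P_1\cdots P_{m/t})\mx{A}\|_\text{F}^2$, which I would telescope as $I - P_1\cdots P_{m/t} = \sum_{k}(I-P_k) P_{k+1}\cdots P_{m/t}$. The cross terms in the resulting expansion vanish because for $k<l$ the product $P_{k+1}\cdots P_{m/t}$ contains $P_l$ and $P_l(I-P_l)=0$; Pythagoras then gives $\|(I-P_1\cdots P_{m/t})\mx{A}\|_\text{F}^2 = \sum_k \|(I-P_k)P_{k+1}\cdots P_{m/t}\mx{A}\|_\text{F}^2 \leq \sum_k \|(I-P_k)\mx{A}\|_\text{F}^2$, using non-expansiveness of orthogonal projections, which is $\leq (m/t)\alpha_t J_{\text{OPT}}(m) = 2^{\log_2(m/t)}\alpha_t J_{\text{OPT}}(m)$.

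For the metric case I would instead construct a specific tensor $\mx{Y}$ that is constant on each co-cluster of $\C$, giving the upper bound $J(\C)\leq \dist(\mx{A},\mx{Y})$. Choose a representative index tuple $\vec{i}^*_k(r_k)$ for each sub-cluster $r_k$ in subset $k$, and set $y_{(i_1,\ldots,i_m)} = a_{(\vec{i}^*_1(r_1),\ldots,\vec{i}^*_{m/t}(r_{m/t}))}$. Telescoping the scalar distance $d(a_{(i_1,\ldots,i_m)},a_{(\vec{i}^*_1,\ldots,\vec{i}^*_{m/t})})$ one subset at a time via the triangle inequality, the $k$-th step contribution (summed over all index tuples) collapses, after reindexing away the already-starred subsets, to a ``sum of distances to a representative fiber'' in the subset-$k$ clustering. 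Picking $\vec{i}^*_k(r_k)$ near the optimal centroid/median of its cluster turns this into the standard 1-median bound of $2J_k \leq 2\alpha_t J_{\text{OPT}}(m)$, and summing over the $m/t$ steps gives $J(\C)\leq 2(m/t)\alpha_t J_{\text{OPT}}(m)$. The main obstacle is exactly this reindexing: after replacing the first $(k{-}1)t$ index groups by their starred values, the inner distance lives on a \emph{slice} of $\mx{A}$ weighted by the cluster sizes $|C_l^{r_l}|$, and one must argue that summing this weighted slice-wise contribution over $r_1,\ldots,r_{k-1}$ reassembles into the clustering cost of $\mx{A}$ itself rather than blowing up. This requires per-cluster (not per-co-cluster) representatives together with an averaging/accounting step that the Euclidean proof avoids entirely through the non-expansiveness of projections.
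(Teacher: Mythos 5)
Your Euclidean argument is correct but takes a genuinely different route from the paper. The paper proceeds by recursive halving: it pairs up subsets of dimensions, proves a Pythagorean decomposition $\|\mx{A}-\q_1^{l+1}\cdot\mx{A}\|^2 = \|{\q_1^l}^\bot\q_2^l\cdot\mx{A}\|^2+\|\q_1^l{\q_2^l}^\bot\cdot\mx{A}\|^2+\|{\q_1^l}^\bot{\q_2^l}^\bot\cdot\mx{A}\|^2$, and inducts over $\log_2(m/t)$ levels, losing a factor $2$ per level against the \emph{maximum} sub-clustering error. Your one-shot telescoping $\mx{I}-P_1\cdots P_{m/t}=\sum_k(\mx{I}-P_k)P_{k+1}\cdots P_{m/t}$, with cross terms killed by commutativity and $P_l(\mx{I}-P_l)=0$ and each term controlled by non-expansiveness of the projections, reaches the \emph{sum} $\sum_k\|(\mx{I}-P_k)\mx{A}\|^2$ directly. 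Since $2^{\log_2 m/t}=m/t$, the two bounds coincide (yours is in fact never worse, as the sum is at most $m/t$ times the max), and your version dispenses with the padding to $m=2^ht$ and the induction. Your bridging lemma (restricting the optimal $m$-dimensional clustering to subset $j$ and flattening its centers) is exactly the paper's inequality~(\ref{eq:10}).

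The metric case, however, has a genuine gap, and you have located it yourself. Your construction routes everything through data-point representatives $a_{(\vec{\imath}^*_1,\ldots,\vec{\imath}^*_{m/t})}$, so the $k$-th telescoping step, after summing out the already-starred index groups, produces $\sum_{r_1,\ldots,r_{k-1}}\prod_{l<k}|C_l^{r_l}|$ times the subset-$k$ assignment cost evaluated on the single slice $\mx{A}(\vec{\imath}^*_1(r_1),\ldots,\vec{\imath}^*_{k-1}(r_{k-1}),\cdot)$. For an arbitrary (e.g.\ worst-case) choice of representatives this weighted sum of slice costs is \emph{not} bounded by $J_k$: a large cluster whose representative slice happens to be expensive inflates the bound. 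The fix you gesture at does work --- choose each $\vec{\imath}^*_l(r_l)$ uniformly at random in its cluster, so that in expectation the weights $|C_l^{r_l}|\cdot\tfrac{1}{|C_l^{r_l}|}$ reassemble the sum over all slices, and then apply the averaged 1-median bound to pick up only a factor $2$ in step $k$ --- but this averaging is the crux of the metric case and is not carried out in your proposal. Note that the paper sidesteps the issue entirely: it never introduces data-point representatives, but instead works with the \emph{optimal} representative tensors $\mx{M}^1_i$ for each single-dimension clustering, uses the elementwise-minimality of $\widehat{\mx{M}}_i$ to get $\dist(\bm{R}^1_i\cdot\mx{M}^1_i,\bm{R}^1_i\bm{R}^{m-i}_{i+1}\cdot\widehat{\mx{M}}_i)\le\dist(\bm{R}^1_i\cdot\mx{M}^1_i,\bm{R}^{m-i}_{i+1}\cdot\mx{M}^{m-i}_{i+1})$, and then peels off one dimension at a time with the tensor-level triangle inequality, paying $2\dist(\mx{A},\bm{R}^1_i\cdot\mx{M}^1_i)$ per dimension. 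Until you supply the averaging/accounting step, only the Euclidean half of your proof is complete.
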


Theorem~\ref{thm:mway} is quite general, and it can be combined with some
natural assumptions (see Section \ref{sec:impl}) to yield results for tensor
clustering with more general divergence functions too (here $\rho_d$ might be greater than
1).

\subsection{Analysis:  Theorem~\ref{thm:mway}, Euclidean case}
\label{sec:analysis-actual}
We begin our proof with the Euclidean case, i.e., $\dist(x,y) = (x-y)^2$.  Our
proof is inspired by the techniques of~\cite{coclus}. 
We establish that given a clustering algorithm which clusters along $t$ of the
$m$ dimensions at a time\footnote{One could also consider clustering
  differently sized subsets of the dimensions, say $\set{t_1,\ldots,t_r}$, where
  $t_1+\cdots+t_r = m$. However, this requires unilluminating notational jugglery,
  which we can skip for simplicity of exposition.} with an approximation
factor of $\alpha_t$, our CoTeC algorithm achieves an objective within a
factor $O(\lceil m/t\rceil\alpha_t)$ of the optimal. For example, for $t=1$ we can use the
seeding methods of \cite{ecmlbreg,kmeanstr} or the stronger approximation
algorithms of~\cite{acker2}. We assume without loss of generality (wlog) that
$m = 2^ht$ for an integer $h$ (otherwise, pad in empty dimensions).

Since for the squared Frobenius norm, each cluster ``center'' is given by the
mean, we can recast Problem~\eqref{eq:2} into a more convenient form. To that
end, note that the individual entries of the means tensor $\mx{M}$ are given
by (\emph{cf.}~(\ref{eq:11}))
\begin{equation}
  \label{eq:28}
  \mean_{I_1\ldots I_m} = \frac{1}{|I_1|\cdots |I_m|} \sum_{i_1 \in I_1,\ldots, i_m \in I_m} a_{i_1\ldots i_m},
\end{equation}
with index sets $I_j$ for \sml{$1 \leq j \leq m$}. Let
$\bmx{C}_j$ be the normalized cluster indicator matrix obtained by normalizing
the columns of $\mx{C}_j$, so that $\bmx{C}_j^\top\bmx{C}^{}_j =
\mx{I}_{k_j}$. Then, we can rewrite~\eqref{eq:2} in terms of projection
matrices $\mx{P}_j$ as:
\begin{equation}
  \label{eq:29}
    \underset{\C = (\bmx{C}_1,\ldots,\bmx{C}_m)}{\text{minimize}}\quad J(\C) =
    \frob{\mx{A} - (\mx{P}_1, \ldots, \mx{P}_m)\cdot \mx{A}}^2,\quad
    \text{s.t.}\ \mx{P}_j = \bmx{C}^{}_j\bmx{C}_j^\top.
\end{equation}

\begin{lemma}[Pythagorean]
  \label{lemm:trace}
  Let $\p = (\mx{P}_1, \ldots, \mx{P}_t)$, $\s =
  (\mx{P}_{t+1},\ldots,\mx{P}_m)$, and $\p^\bot  = (\mx{I} - \mx{P}_1,
  \ldots,   \mx{I} - \mx{P}_t)$ be \emph{collections} of projection matrices
  $\mx{P}_j$. Then,
  {\small
    \begin{equation*}
      \begin{split}
        \|\tp{\p, \s}{\mx{A}} + ({\p^\bot,\rc}) \cdot \mx{B}\|^2 =
        \|({\p,\s}) \cdot {\mx{A}}\|^2 + \|({\p^\bot,\rc}) \cdot {\mx{B}}\|^2, 
      \end{split}
    \end{equation*}}%
  where $\rc$ is a collection of $m-t$ projection matrices.
\end{lemma}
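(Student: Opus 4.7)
The plan is to expand the squared norm of the sum and show that the cross term vanishes by invoking the inner product identity from equation~\eqref{eq:31}, using the fact that each $\mx{P}_j$ (being of the form $\bmx{C}_j\bmx{C}_j^\top$ with $\bmx{C}_j^\top\bmx{C}_j = \mx{I}$) is a symmetric and idempotent orthogonal projector.

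First I would write
\begin{equation*}
\|(\p,\s)\cdot\mx{A} + (\p^\bot,\rc)\cdot\mx{B}\|^2 = \|(\p,\s)\cdot\mx{A}\|^2 + 2\ip{(\p,\s)\cdot\mx{A}}{(\p^\bot,\rc)\cdot\mx{B}} + \|(\p^\bot,\rc)\cdot\mx{B}\|^2,
\end{equation*}
so the lemma reduces to showing that the cross inner-product is zero. Applying the inner product rule~\eqref{eq:31} to the cross term, with $(\mx{P}_1,\ldots,\mx{P}_m)$ playing the role of $(\p,\s)$ and $(\mx{I}-\mx{P}_1,\ldots,\mx{I}-\mx{P}_t,\mx{R}_1,\ldots,\mx{R}_{m-t})$ the role of $(\p^\bot,\rc)$, I would pull everything onto $\mx{B}$ to obtain
\begin{equation*}
\ip{\mx{A}}{\bigl(\mx{P}_1^\top(\mx{I}-\mx{P}_1),\;\ldots,\;\mx{P}_t^\top(\mx{I}-\mx{P}_t),\;\mx{P}_{t+1}^\top\mx{R}_1,\;\ldots,\;\mx{P}_m^\top\mx{R}_{m-t}\bigr)\cdot\mx{B}}.
\end{equation*}

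The key step is now to observe that for each $j \leq t$ we have $\mx{P}_j^\top(\mx{I}-\mx{P}_j) = \mx{P}_j - \mx{P}_j^2 = \mx{0}$, since $\mx{P}_j$ is a symmetric idempotent projector. By the explicit formula~\eqref{eq:22} for multilinear multiplication, whenever any one of the factor matrices $\mx{P}_k^{(\cdot)}$ is the zero matrix, every entry of the resulting tensor is zero. Hence the tensor inside the inner product above vanishes identically, the cross term is zero, and the Pythagorean identity follows.

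I do not expect any real obstacle: the only thing to verify carefully is the symmetry and idempotence of $\mx{P}_j$, which is immediate from $\mx{P}_j = \bmx{C}_j\bmx{C}_j^\top$ together with $\bmx{C}_j^\top\bmx{C}_j = \mx{I}_{k_j}$ as noted just before~\eqref{eq:29}; and the formal step of concluding that a multilinear product with a single zero factor is the zero tensor, which is a direct consequence of~\eqref{eq:22}. The identity~\eqref{eq:31} does the rest of the work for us.
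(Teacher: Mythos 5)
Your proof is correct and follows essentially the same route as the paper: expand the squared norm and kill the cross term via the inner-product identity~\eqref{eq:31} together with $\mx{P}_j^\top(\mx{I}-\mx{P}_j)=\mx{P}_j-\mx{P}_j^2=\mathbf{0}$. The extra detail you supply (symmetry and idempotence of $\mx{P}_j=\bmx{C}_j^{}\bmx{C}_j^\top$, and the observation that a single zero factor annihilates the multilinear product via~\eqref{eq:22}) is exactly what the paper leaves implicit.
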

\begin{proof}  
  Using $\frob{\mx{A}}^2 = \ip{\mx{A}}{\mx{A}}$ we can rewrite the l.h.s. as
\begin{equation*}
  \begin{split}
    &\|\tp{\p, \s}{\mx{A}} + \tp{\p^\bot,\rc}{\mx{B}}\|^2\\
    & = \|\tp{\p,\s}{\mx{A}}\|^2 + \|\tp{\p^\bot,\rc}{\mx{B}}\|^2 +
    2\bigl\langle\tp{\p, \s}{\mx{A}}, \tp{\p^\bot,\rc}{\mx{B}}\bigr\rangle. 
  \end{split}
\end{equation*}
The last term is immediately seen to be zero using Property~(\ref{eq:31}) and
the fact that $\mx{P}_j^\top\mx{P}_j^\bot = \mx{P}_j(\mx{I} - \mx{P}_j) =
\mathbf{0}$.
\end{proof}

\textbf{Some more notation:} Since we cluster along $t$ dimensions at a time,
we recursively partition the initial set of all $m$ dimensions until (after
$\log(m/t)+1$ steps), the sets of dimensions have length $t$. Let $l$ denote
the level of recursion, starting at $l=\log(m/t)=h$ and going down to
$l=0$. At level $l$, the sets of dimensions will have length $2^lt$ (so that
for $l=0$ we have $t$ dimensions). We represent each clustering along a subset
of $2^lt$ dimensions by its corresponding $2^lt$ projection matrices. We
gather these projection matrices into the collection $\p^l_i$ (note boldface),
where the index $i$ ranges from $1$ to $2^{h-l}$.

\begin{example}
  \label{ex:clust.part}
  Consider an order-$8$ tensor where we group $t=2$ dimensions at a
  time. Then, $h=\log(m/t)=2$ and we have $3$ levels. We recursively divide the set
  of dimensions in the middle, i.e., $\{1,\ldots,8\}$ into $\{1,\ldots, 4\}$
  and $\{5, \ldots, 8\}$ and so on, ending with $\{\{1,2\}, \{3,4\},
  \{5,6\}, \{7,8\}\}$. The projection matrix for dimension $i$ is
  $\mx{P}_i$, and the full tensor clustering is represented by $(\mx{P}_1,
  \ldots, \mx{P}_8)$. For each level $l=0,1,2$, individual collections of
  projection matrices $\p_i^l$ are
\begin{equation*}
  \begin{split}
    &\p_1^2 = (\mx{P}_1, \, \mx{P}_2, \, \mx{P}_3, \,  \mx{P}_4, \, \mx{P}_5, \, \mx{P}_6, \, \mx{P}_7, \, \mx{P}_8)\\
    &\p_1^1 = (\mx{P}_1, \, \mx{P}_2, \, \mx{P}_3, \,  \mx{P}_4),\quad
    \p_2^1 = (\mx{P}_5, \, \mx{P}_6, \, \mx{P}_7, \, \mx{P}_8)\\
    &\p_1^0 = (\mx{P}_1, \, \mx{P}_2),\quad \ldots, %
    \quad\p^0_4 = (\mx{P}_7, \, \mx{P}_8).
  \end{split}
\end{equation*}
\end{example}

We also need some notation to represent a complete tensor clustering along all
$m$ dimensions, where \emph{only a subset} of $2^lt$ dimensions are
clustered. We pad the collection $\p^l_i$ with $m - 2^lt$ identity
matrices for the non-clustered dimensions, and call this padded collection
$\q^l_i$.
With recursive partitioning of the dimensions, $\q_i^l$ subsumes $\q_j^0$ for
\sml{$2^l(i-1) < j \leq 2^li$}, so that {\small
  \begin{equation*}
    \q_i^l= \prod\nolimits_{j=2^l(i-1)+1}^{2^li} \q_j^0.
  \end{equation*}
}%
At level $0$, the algorithm yields the collections $\q_i^0$ and $\p_i^0$. The
remaining clusterings are simply \emph{combinations}, i.e., products of these
level-0 clusterings.  We denote the collection of $m-2^lt$ identity matrices
(of appropriate size) by $\id^{l}$, so that $\q_1^l = (\p_1^l, \id^l)$.
Accoutered with our notation, we now prove the main lemma that relates the
combined clustering to its sub-clusterings.

\begin{lemma}
  \label{lemm:rec}
  Let $\mx{A}$ be an order-$m$ tensor and $m \geq 2^lt$. The objective
  function for any $2^lt$-dimensional clustering $\p_i^l =
  (\p_{2^l(i-1)+1}^0, \ldots,\p_{2^li}^0)$ can be bound via the
  sub-clusterings along only one set of dimensions of size $t$ as
  \begin{equation}
    \label{eq:2normT}
    \begin{split}
      \frob{\mx{A} - \q_i^l \cdot \mx{A}}^2 %
      \leq \max_{2^l(i-1) < j \leq 2^li} 2^{l} \frob{\mx{A} - \q_j^{0}\cdot \mx{A}}^2.
    \end{split}
  \end{equation}
\end{lemma}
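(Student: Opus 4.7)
The plan is to prove the lemma by induction on $l$. The base case $l=0$ is immediate: the range in the max collapses to the single index $i$, and~\eqref{eq:2normT} reduces to an equality. For the inductive step, I would exploit the recursive structure of the partitioning, combined with a Pythagorean decomposition and the contractivity of projection collections.

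Specifically, assume the result at level $l-1$. The key structural fact is that $\q_i^l = \q_{2i-1}^{l-1}\q_{2i-0}^{l-1}$. This follows from the product rule of multilinear multiplication applied to the padded collections: the two sub-collections act on disjoint index ranges (the first and second halves of the $2^lt$ dimensions clustered by $\q_i^l$), and everywhere outside these ranges each factor contributes an identity, so the pointwise matrix products reproduce exactly $\q_i^l$. This allows the decomposition
\begin{equation*}
\mx{A} - \q_i^l \cdot \mx{A} \;=\; \bigl(\mx{A} - \q_{2i-1}^{l-1}\cdot\mx{A}\bigr) \;+\; \q_{2i-1}^{l-1}\cdot\bigl(\mx{A} - \q_{2i}^{l-1}\cdot\mx{A}\bigr).
\end{equation*}

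Next I would verify that the two summands above are orthogonal in the Frobenius inner product. Expanding $\langle\mx{A}-\q_{2i-1}^{l-1}\mx{A},\,\q_{2i-1}^{l-1}\mx{A}-\q_i^l\mx{A}\rangle$ into four terms and using Property~\eqref{eq:31} to shift each $\q_{2i-1}^{l-1}$ to the opposite side, the fact that every constituent $\mx{P}_j=\bmx{C}_j\bmx{C}_j^\T$ is both symmetric and idempotent (so $(\q_{2i-1}^{l-1})^\T\q_{2i-1}^{l-1}=\q_{2i-1}^{l-1}$) collapses the four terms pairwise to zero. Invoking the Pythagorean Lemma~\ref{lemm:trace} (or, equivalently, the orthogonality just shown) then yields
\begin{equation*}
\frob{\mx{A}-\q_i^l\cdot\mx{A}}^2 \;=\; \frob{\mx{A}-\q_{2i-1}^{l-1}\cdot\mx{A}}^2 \;+\; \frob{\q_{2i-1}^{l-1}\cdot(\mx{A}-\q_{2i}^{l-1}\cdot\mx{A})}^2.
\end{equation*}

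The second term on the right can be dropped to $\frob{\mx{A}-\q_{2i}^{l-1}\cdot\mx{A}}^2$ by contractivity of $\q_{2i-1}^{l-1}$; this again follows from Property~\eqref{eq:31} plus idempotency, since $\|\q_{2i-1}^{l-1}\mx{Y}\|^2 = \langle\mx{Y},\q_{2i-1}^{l-1}\mx{Y}\rangle \leq \|\mx{Y}\|\,\|\q_{2i-1}^{l-1}\mx{Y}\|$ by Cauchy--Schwarz. Applying the induction hypothesis to each of $\q_{2i-1}^{l-1}$ and $\q_{2i}^{l-1}$ (which cluster disjoint halves of the indices $2^l(i-1)<j\leq 2^l i$), each of the two resulting terms is at most $2^{l-1}$ times $\max_{2^l(i-1)<j\leq 2^l i}\frob{\mx{A}-\q_j^0\cdot\mx{A}}^2$, and their sum is the desired $2^l\cdot\max$.

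The main obstacle I anticipate is bookkeeping-level rather than conceptual: making the product rule $\q_i^l=\q_{2i-1}^{l-1}\q_{2i}^{l-1}$ precise for the padded collections, and cleanly writing down the Pythagorean orthogonality when the ``complementary projection'' is implicit in the subtraction $\mx{A}-\q_{2i-1}^{l-1}\mx{A}$ rather than encoded as a single projection collection $\p^\bot$ in the form required by Lemma~\ref{lemm:trace}. Both reduce to repeated application of Property~\eqref{eq:31} together with $\mx{P}_j^\T=\mx{P}_j$ and $\mx{P}_j^2=\mx{P}_j$, so the proof is morally one line of Pythagoras plus one line of contractivity per recursion level.
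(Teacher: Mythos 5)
Your proof is correct, and it follows the same overall strategy as the paper's: induction on the level, the product identity $\q_i^l=\q_{2i-1}^{l-1}\q_{2i}^{l-1}$, and orthogonality of the pieces via Lemma~\ref{lemm:trace} together with symmetry and idempotency of the $\mx{P}_j$. The one place you diverge is in handling the cross term: the paper expands $\mx{A}$ into all four products of $\q$ and $\q^\bot$ factors, assumes wlog that one cross term dominates the other, and recombines two of the three Pythagorean pieces into a single level-$l$ objective, yielding a bound of twice the maximum of the two half-clusterings; you instead keep the two-term orthogonal split $\bigl(\mx{A}-\q_{2i-1}^{l-1}\mx{A}\bigr)+\q_{2i-1}^{l-1}\bigl(\mx{A}-\q_{2i}^{l-1}\mx{A}\bigr)$ and discard the leading projection by contractivity, obtaining the \emph{sum} of the two half-clustering objectives. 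Your intermediate bound is marginally tighter (a sum rather than twice a max) and avoids the wlog ordering, but after taking the maximum to close the induction both arguments deliver the same factor $2^l$. All the supporting steps you cite --- $\q^\T\q=\q$ from $\mx{P}_j=\bmx{C}_j\bmx{C}_j^\T$ and $\bmx{C}_j^\T\bmx{C}_j=\mx{I}$, the Cauchy--Schwarz contractivity argument, and the commuting of padded collections acting on disjoint dimension blocks --- check out.
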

We can always (wlog) permute dimensions so that any set of $2^l$ clustered
dimensions maps to the first $2^l$ ones. Hence, it suffices to prove the lemma
for $i=1$, i.e., the first $2^l$ dimensions.

\begin{proof}
  We prove the lemma for $i=1$ by induction on $l$.

  \emph{Base:} Let $l=0$. Then $\q^l_1 = \q_1^0$, and~\eqref{eq:2normT}
  holds trivially.

  \emph{Induction:} Assume the claim holds for $l \geq 0$. Consider a
  clustering $\p_1^{l+1} = (\p_1^l,\p_2^l)$, or equivalently $\q_1^{l+1} =
  \q_1^l\q_2^l$.  Using $\bm{P}+\bm{P}^\bot = \bm{I}$, we decompose $\mx{A}$ as
  {\small
  \begin{align*}
    \mx{A} &\quad= (\p_1^{l+1}+ {\p_1^{l+1}}^\bot, \id^{l+1}) \cdot \mx{A}
    \quad=\quad (\p_1^l + {\p_1^l}^\bot, \p_2^l + {\p_2^l}^\bot, \id^{l+1}) \cdot
    \mx{A}\\
    &\quad= (\p_1^{l},\p_2^{l},\id^{l+1})\cdot \mx{A} + ({\p_1^{l}}^\bot, \p_2^{l},
    \id^{l+1})\cdot \mx{A} + (\p_1^{l}, {\p_2^{l}}^\bot, \id^{l+1})\cdot \mx{A} + (
    {\p_1^{l}}^\bot, {\p_2^{l}}^\bot, \id^{l+1}) \cdot \mx{A}\\
    &\quad= \q_1^l\q_2^l \cdot \mx{A} + {\q_1^l}^\bot \q_2^l \cdot \mx{A} + \q_1^l{\q_2^l}^\bot \cdot
    \mx{A} + {\q_1^l}^\bot {\q_2^l}^\bot \cdot \mx{A},
  \end{align*}}%
where \sml{${\q_1^l}^\bot = ({\p_1^{l}}^\bot, \id^{l})$}. Since \sml{$\q_1^{l+1} =
  \q_1^l \q_2^l$}, the Pythagorean Property~\ref{lemm:trace} yields
{\small
\begin{equation*}
  \|\mx{A} - \q_1^{l+1}\cdot \mx{A}\|^2 = \|{\q_1^l}^\bot \q_2^l\cdot \mx{A}\|^2 + \|\q_1^l
  {\q_2^l}^\bot \cdot \mx{A}\|^2 + \|{\q_1^l}^\bot  {\q_2^l}^\bot\cdot \mx{A}\|^2. 
\end{equation*}
}%
Combining the above equalities with the assumption (wlog)
{\small
  \begin{equation*}
    \|{\q_1^l}^\bot \q_2^l \cdot \mx{A}\|^2  \geq  \|\q_1^l  {\q_2^l}^\bot\cdot  \mx{A}\|^2, 
  \end{equation*}
}
we obtain the inequalities {\small
  \begin{align*}
    &\|\mx{A} - \q_1^l\q_2^l\cdot \mx{A}\|^2\ \ \leq\ \ 2 \bigl(\|{\q_1^l}^\bot \q_2^l\cdot \mx{A}\|^2 + \|{\q_1^l}^\bot {\q_2^l}^\bot \cdot \mx{A}\|^2\bigr)\\
    &= 2 \|{\q_1^l}^\bot \q_2^l\cdot \mx{A} + {\q_1^l}^\bot {\q_2^l}^\bot \cdot \mx{A}\|^2\  =\
    2\|{\q_1^l}^\bot(\q_2^l+ {\q_2^l}^\bot)\cdot \mx{A}\|^2\\
    &= 2\|{\q_1^l}^\bot\cdot \mx{A}\|^2 \ = \ 2 \|\mx{A} - \q_1^{l}\cdot \mx{A}\|^2\\
    &\leq  2\max_{1 \leq j  \leq 2^l} \|\mx{A} - \q_j^{l}\cdot \mx{A}\|^2\ \ \leq\ \ 2\cdot 2^{l} \max_{1 \leq j \leq 2^{l+1}} \|\mx{A} -  \q_j^{0}\cdot \mx{A}\|^2,
  \end{align*}}%
where the last step follows from the induction hypothesis~(\ref{eq:2normT}),
and the two norm terms in the first line are combined using the Pythagorean
Property.
\end{proof}

\begin{proof} (\emph{Thm.~\ref{thm:mway}, Case 1}). Let $m=2^ht$. Using an
  algorithm with guarantee $\alpha_t$, we cluster each subset (indexed by $i$) of
  $t$ dimensions to obtain $\q_i^0$. Let $\s_i$ be the optimal sub-clustering
  of subset $i$, i.e., the result that $\q_i^0$ would be if $\alpha_t$ were $1$. We
  bound the objective for the collection of all $m$ sub-clusterings $\p_1^h =
  \q_1^h$ as
  \begin{equation}
    \label{eq:14}
    \frob{\mx{A}-\q_1^{h}\cdot \mx{A}}^2 \leq 2^h \max_j \|\mx{A} -  \q_j^{0}\cdot
    \mx{A}\|^2 \leq  2^h \alpha_t \max_j \frob{\mx{A} - \s_j\cdot \mx{A}}^2.
  \end{equation}
  The first inequality follows from Lemma~\ref{lemm:rec}, while the last
  inequality follows from the $\alpha_t$ approximation factor that we
  used to get sub-clustering $\q_j^{0}$.

  So far we have related our approximation to an optimal sub-clustering
  across a set of dimensions.  Let us hence look at the relation between
  such an optimal sub-clustering $\s$ of the first $t$ dimensions (via
  permutation, these dimensions correspond to an arbitrary subset of size
  $t$), and the optimal tensor clustering $\f$ across all the $m=2^ht$
  dimensions. Recall that a clustering can be expressed by either the
  projection matrices collected in $\q^l_1$, or by cluster indicator
  matrices $\mx{C}_i$ together with the mean tensor $\mx{M}$, so that
  $$(\mx{C}_1, \ldots, \mx{C}_{2^lt}, \id^{l}) \cdot \mx{M} = \q_1^l \cdot\mx{A}.$$

  Let $\mx{C}^S_j$ and $\mx{C}^F_j$ be the dimension-wise cluster indicator
  matrices for $\s$ and $\f$, respectively; 
  By definition, $\s$ solves
  \begin{equation*}
    \begin{split}
      \underset{\mx{C}_1,\ldots,\mx{C}_t, \mx{M}}{\text{min}}\ \
      \frob{\mx{A} - (\mx{C}_1, \ldots, \mx{C}_t, \id^0) \cdot
        \mx{M}}^2,\quad\text{s.t. } \mx{C}_j \in \set{0,1}^{n_j\times k_j}, 
    \end{split}
  \end{equation*}
  which makes $\s$ even better than the sub-clustering $(\mx{C}^F_1, \ldots,
  \mx{C}^F_t)$ induced by the optimal $m$-dimensional clustering $\f$. Thus,
  \begin{align}
    \nonumber
    \frob{\mx{A} - \s \cdot \mx{A}}^2 & \leq \min_{\mx{M}}\ \frob{\mx{A} -
      (\mx{C}^F_1, \ldots,  \mx{C}^F_t,  \id^0) \cdot \mx{M}}^2\\ 
    \nonumber
    &\leq \frob{\mx{A} - ( \mx{C}^F_1, \ldots, \mx{C}^F_t, \id^0) (\mx{I}, \ldots,
    \mx{I}, \mx{C}^F_{t+1}, \ldots, \mx{C}^F_m) \cdot \mx{M}^F}^2\\ 
    \label{eq:10}
    &= \frob{\mx{A} - \f \cdot \mx{A}}^2, 
  \end{align}
  where $\mx{M}^F$ is the tensor of means for the optimal $m$-dimensional
  clustering. 
  Combining~(\ref{eq:14}) with~\eqref{eq:10} yields the final bound for
  the combined clustering $\C = \q_1^h$,
  \begin{equation*}
    J_m(\C) =  \frob{\mx{A} - \q_1^{h}\cdot \mx{A}}^2 \leq 2^h\alpha_t \frob{\mx{A} - \f \cdot
      \mx{A}}^2 = 2^h \alpha_t \jopt(m),
  \end{equation*}
  which completes the proof of the theorem.
\end{proof}

\subsection{Analysis: Theorem~\ref{thm:mway}, Metric case}
Now we present our proof of Thm.~\ref{thm:mway} for the case where
$\dist(x,y)$ is a metric, such as an $\ell_p$ distance or separable
Hilbertian metric.
For this case, recall that the tensor clustering
problem is
\begin{equation}
  \label{eq:17}
  \underset{(\mx{C}_1,\ldots,\mx{C}_m),\mx{M}}{\text{minimize}}
  J(\C) = \dist(\mx{A}, (\mx{C}_1, \ldots, \mx{C}_m) \cdot  \mx{M}),\quad\text{s.t. } \mx{C}_j \in \set{0,1}^{n_j\times k_j}. 
\end{equation}
Since in general the best representative $\mx{M}$ is not the mean tensor, we
cannot use the shorthand $\p \cdot \mx{A}$ for $\mx{M}$, so the proof is different
from the Euclidean case.

\begin{proof}
We will split the dimensions in a different way. Let $\bm{R}_i^\ell$ be
the collection of clusterings of dimensions $i, \ldots,
i+\ell-1$. $\rr_i^\ell$ combines the $\C_j$ in a manner analogous to how
$\q_i^l$ combines projection matrices. For simplicity, the proof here
is for clustering single dimensions at a time, but it generalizes in a
straightforward way to chunks of $t$ dimensions, leading to a factor
$2m/t$ instead of $2m$.

Let us first prove a relation for 
any subset of the last
$m-i+1$ dimensions, $\rr_i^1\rr_{i+1}^{m-i} = \rr_i^{m-i+1}$. 
Let $\mx{M}^\ell_i = \argmin_{\mx{X}} d(\mx{A}, \bm{R}^\ell_i\cdot \mx{X})$
be the optimal representatives for the clustering collections
$\rr_i^1$ and $\rr_{i+1}^{m-i}$, and
\begin{equation*}
  \widehat{\mx{M}}_i = \argmin_{\mx{X}} \dist(\bm{R}^1_i\mx{M}^1_i,
  \bm{R}^{1}_i\bm{R}^{m-i}_{i+1}\cdot \mx{X}), \quad   
  \mx{X} \in \mathbb{R}^{n_1 \times \ldots \times n_{i-1} \times k_i
    \times \ldots \times k_m}.
\end{equation*}

The index $\iota$ will run over dimension $i$, and the multi-indices,
$r$, $j$ over dimensions $1, \ldots, i-1$ and $i+1, \ldots, m$,
respectively. The indices $I$ and multi-indices $J$ refer to the
clusterings in  $\rr_i^1$ and $\rr_{i+1}^{m-i}$,
respectively. Since
$\widehat{\mx{M}}_i$ is the element-wise minimum, we have
\begin{align*}
  &\dist(\bm{R}^1_i \cdot \mx{M}^1_i, \bm{R}^1_i\bm{R}^{m-i}_{i+1}\cdot \widehat{\mx{M}}_i) =
  \sum_{I,J}\sum_{\iota \in I,r} \min_{\mu_{IJr}\in \mathbb{R}} \sum_{j \in J} \dist((\mu^1_\iota)_{Ijr}, \mu_{IJr})\\
  & \leq \sum_{I,J}\sum_{\iota \in I,r}\sum_{j \in J}
  \dist((\mu^1_\iota)_{Ijr}, (\mu^{m-i}_{i+1})_{\iota Jr})\quad=\quad \dist(\bm{R}^1_i\cdot \mx{M}^1_i, \bm{R}^{m-i}_{i+1}\cdot \mx{M^{m-1}_{i+1}}).
\end{align*}
We use this relation and the triangle inequality to break down
$\rr_1^m$ into its single-dimansional parts. We then relate the
objectives of these parts to the optimal single-dimensional
clusterings $\bm{S}^{1}_i$.
\begin{align}
  \nonumber
  &\min_{\mx{M^{m}}}\ \ \dist(\mx{A}, \bm{R}^1_1 \bm{R}^{m-1}_2 \cdot
  \mx{M}^{m})  \leq 
  \dist(\mx{A}, \bm{R}^1_1 \bm{R}^{m-1}_2\cdot \widehat{\mx{M}}_1)\\ 
  \nonumber
  &\qquad\leq \dist(\mx{A}, \bm{R}^1_1\cdot \mx{M}^1_1) + \dist(\bm{R}^1_1\cdot
  \mx{M}^1_1, \bm{R}^1_1\bm{R}^{m-1}_2\cdot \widehat{\mx{M}}_1) \\ 
  \nonumber
  &\qquad\leq  \dist(\mx{A}, \bm{R}^1_1 \cdot \mx{M}^1_1) + \dist(\bm{R}^1_1\cdot
  \mx{M}^1_1, \bm{R}^{m-1}_2\cdot \mx{M^{m-1}_2})\\
  \nonumber
  &\qquad\leq 2\dist(\mx{A}, \bm{R}^1_1\cdot \mx{M}^1_1) +
  \dist(\mx{A}, \bm{R}^{m-1}_2\cdot \mx{M}^{m-1}_2)\\
  \label{eq:repbreak}
  &\qquad\leq  2\dist(\mx{A}, \bm{R}^1_1\cdot \mx{M}^1_1) +
  2\dist(\mx{A}, \bm{R}^1_2\cdot \mx{M}^1_2) + \dist(\mx{A},
  \bm{R}^{m-1}_2\cdot \mx{M}^{m-1}_2)\\
  \nonumber
  &\qquad\leq \ldots\\
  \label{eq:sumsingle}
  &\qquad\leq 2\sum_{i=1}^m \dist(\mx{A}, \bm{R}^{1}_i\cdot
  \mx{M}^{1}_i) \qquad\leq 2\sum_{i=1}^m \alpha_1 \min_{\mx{X}}\dist(\mx{A},
  \bm{S}^{1}_i \cdot \mx{X}^1). 
\end{align}
For (\ref{eq:repbreak}), we applied the same steps as before to
$\rr^1_2$ and $\rr^{m-2}_3$, and then continued this breakdown, always
splitting off the first dimension. The last relation follows from the
1D approximation algorithm that was used. What is left is to bound
(\ref{eq:sumsingle}) by the objective for the optimal $m$-dimensional clustering
$\bm{F}\cdot \mx{M}_F = \bm{F}_1 \bm{F}_2 \ldots \bm{F}_m \cdot
  \mx{M}_F$. Note that, since non-clustered dimensions have identity
  matrices, the cluster parts commute: $\bm{F}_i \bm{F_j} \mx{X} =
  \bm{F}_j \bm{F_i} \mx{X}$.
Owing to the optimality of $\bm{S}^1_i$, we have
\begin{equation*}
  \min_{\mx{X}^1}\ \dist(\mx{A}, \bm{S}^1_i\cdot \mx{X}^1)\ \leq \ 
  \min_{\mx{Y}^1}\ \dist(\mx{A}, \bm{F}_i\cdot \mx{Y}^1)\ \leq \ 
  \min_{\mx{Y}^{m}}\ \dist(\mx{A}, \bm{F}^1_i (\bm{F}^1_1\ldots
  \bm{F}^1_{i-1} \bm{F}^1_{i+1}\ldots \bm{F}^1_m \cdot \mx{Y}^{m})) =
  \dist(\mx{A}, 
  \bm{F}\cdot \mx{M}_F)
\end{equation*}
for any term in the sum (\ref{eq:sumsingle}). Thus, it follows that
\begin{align*}
  \min_{\mx{M^{m}}}\ \ \dist(\mx{A}, \bm{R}^m_1 \cdot
  \mx{M}^{m}) \leq 2\sum_{i=1}^m \alpha_1 \min_{\mx{X}}\dist(\mx{A},
  \bm{S}^{1}_i \cdot \mx{X}^1) \leq 2m\alpha_1 \dist(\mx{A}, 
  \bm{F}\cdot \mx{M}_F),
\end{align*}
which completes the proof.
\end{proof}

\subsection{Theorem~\ref{thm:mway} with Bregman divergences}
Theorem~\ref{thm:mway} also applies to Bregman divergences, i.e., 
divergences that can be bounded in terms of squared Euclidean
distances and for which the best representative is the tensor of means
defined in Equation~(\ref{eq:28}) \cite{banerjee}. 

The \emph{Bregman divergence} $\breg(x, y)$ between scalars $x$ and $y$ is
defined as~\citep{breg67,censor:zenios}
\begin{equation}
  \label{eq:4}
  \breg(x, y) = f(x) - f(y) - f'(y)(x-y),
\end{equation}
for a given strictly convex function $f$. With $f = \half x^2$ the
divergence~\eqref{eq:4} reduces to the familiar Euclidean distance
$\half(x-y)^2$, while for $f(x) = x \log x$ it turns into the (generalized)
KL Divergence. For tensors, we extend Definition~\eqref{eq:4} by considering
\emph{separable} Bregman divergences, so that
\begin{equation*}
  \breg(\mx{X}, \mx{Y}) = \sum\nolimits_{i_1,\ldots,i_m} \breg(x_{i_1\ldots i_m}, y_{i_1\ldots i_m}).
\end{equation*}

Let $\sigma_U$ and $\sigma_L$ be
upper and lower bounds, respectively, with $\sigma_L > 0$, such that
\begin{equation}
  \label{eq:bounds}
  \sigma_L \breg(x,y) \leq \|x-y\|^2 \leq \sigma_U \breg(x,y)
\end{equation}
for all $x,y$ in the convex hull of the entries of the given tensor
$\mx{A}$. For KL-divergence, the data must then be bounded away from
zero. 

Since the means tensor is the best representative $\argmin_{\mx{X}}
\breg(\mx{A}, \bm{R}\cdot \mx{A})$ for a clustering $\bm{R}$, we again
use use projection matrices to express clusterings. Let
$\q_1^h$ be, as above, the full combination of projection matrices
from dimension-wise clustering, and $\bm{F} = \argmin_{\bm{Q}}
\breg(\mx{A},\bm{Q}\cdot \mx{A})$ the optimal $m$-dimensional tensor
clustering. Then we know that 
\begin{align}
  \nonumber
  \breg(\mx{A}, \q_1^h) &\leq \sigma_U \|\mx{A}, \q_1^h\|^2\\
  \label{eq:useL2}
  &\leq  \sigma_U 2^{\log_2 m/t} \max_j \|\mx{A} - \q_j^0 \cdot \mx{A}\|^2\\
  \nonumber
  &\leq \frac{\sigma_U}{\sigma_L}2^{\log_2 m/t}\max_j D(\mx{A}, \q_j^0 \cdot
  \mx{A})\\
  \label{eq:refFull}
  &\leq \frac{\sigma_U}{\sigma_L} 2^{\log_2 m/t}\breg(\mx{A}, \bm{F} \cdot
  \mx{A}),
\end{align}
so $\rho_d = \frac{\sigma_U}{\sigma_L}$.
Inequality~(\ref{eq:useL2}) follows from Lemma~\ref{lemm:rec}, and
Inequality~(\ref{eq:refFull}) from an argumentation analogous to
Equation~(\ref{eq:10}).

Curvature bounds as in (\ref{eq:bounds}) seem to be
necessary for Bregman divergences to
guarantee \emph{constant} approximation factors for the underlying 1D
clustering---this intuition is reinforced by the results of~\cite{mcgregor},
who avoided such curvature assumptions and had to be content with a
\emph{non-constant} $O(\log n)$ approximation factor for information theoretic
clustering.

\subsection{Implications}
\label{sec:impl}
To obtain concrete bounds for a variety of tensor clustering problems,
we can use Theorem~\ref{thm:mway} for $t=1$ or $t=2$ with existing 1D
approximation factors $\alpha_t$ from the
literature. Table~\ref{tab:guaran} summarizes the results.

\subsubsection{1D factors for Metric and Bregman clustering}
The $(1+\epsilon)$ approximation factor for $1D$ clustering by
\citet{ackerman} applies to all metrics. It leads to an
$m$-dimensional approximation factor of $\alpha_m=p(m/t)(1+\epsilon)$.  
\citet{dasv07} prove a guarantee in expectation of $\alpha_1 = 8(\log
K + 2)$ for $K$ clusters with Euclidean k-means, resulting in an
expected $\alpha_m = 8p(m/t)(\log K + 2)$.

For Bregman clustering, we arrive at similar results with the
approximation factor
by \citet{acker2} or the extension of \citep{dasv07} in
\citep{ecmlbreg,kmeanstr}.

\subsubsection{Hilbertian metrics}
A special example of metrics are Hilbertian
metrics~\cite{scsm01,hebo05} that arise from conditionally positive
definite (CPD) kernels. 
A real valued function $C: \mathcal{S}
× \mathcal{S} \mapsto
\mathbb{R}$ is called a {\em conditionally positive definite} (CPD)
kernel
on $\mathcal{S}$  if for
any positive integer $n$, any choice of $n$ elements $x_i \in
\mathcal{S}, [i]_1^n$ ($[i]_1^n \equiv i=1,\ldots,n$) and any choice
of $n$ reals $u_i \in \mathbb{R}$ such that $\sum_i u_i = 0$, we have $
\sum_{i,j =0}^n u_i u_j C(x_i, x_j) \geq 0$~\cite{scsm01,becr84}.
The
following remarkable result~\cite{scho38} connects CPD kernels and
Hilbertian metrics, i.e., metrics which can be isometrically embedded
in Hilbert space: There exists a Hilbert space ${\cal H}$ of real-valued
functions on $\mathcal{S}$, and a mapping $\Phi : \mathcal{S} \mapsto {\cal H}$ such
that
\begin{equation*}
\| \Phi(\x) - \Phi(\y) \|^2 = -C(\x,\y) + \frac{1}{2}(C(\x,\x) + C(\y,\y)) = d_C(\x,\y)~,
\end{equation*}
if and only if $C(\cdot,\cdot)$ is a CPD kernel. Hence, given a CPD kernel $C$,
one can construct a Hilbertian metric $d_C(\x,\y)$ which behaves like the squared
Euclidean distance in the Hilbert space. The corresponding kernel is
$K(x,y) = \frac{1}{2}(C(x,y) - C(x,a) - C(y,a) + C(a,a))$ for some
fixed $a \in \mathcal{S}$. 

Here, we choose $\mathcal{S} \subseteq \reals$ and define the distance
of tensors $\mx{X}, \mx{Y}$ as
\begin{equation*}
  d_C(\mx{X},\mx{Y}) = \sum_{i_1,\ldots,i_m}
  d_C(x_{i_1,\ldots,i_m},x_{i_1,\ldots,i_m}). 
\end{equation*}

Since the argument by \cite{dasv07} for their kmeans++ is independent
of the dimensionality, it can be generalized from Euclidean distance 
to distances in a Hilbert space.
\begin{lemma}[1D Hilbertian Metric Clustering]
\label{lem:hilbert1}
For any 1D clustering with a Hilbertian metric $d_C$, one can construct a kmeans++ based
initialization followed by iterative updates using kernel k-means such that if 
$\C$ is the final clustering, then
\begin{equation}
\E[J(\C)] \leq 8 (log K + 2) J_{OPT}~.
\end{equation}
\end{lemma}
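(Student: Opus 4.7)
The plan is to reduce 1D Hilbertian metric clustering to Euclidean \km{} applied in the feature Hilbert space $\mathcal{H}$, and to observe that the Arthur--Vassilvitskii analysis of \km{} does not depend on the ambient dimension, only on the Hilbert-space inner-product structure of the embedded data. First I would invoke Schoenberg's theorem, which is quoted just above the lemma, to fix an isometric embedding $\Phi:\mathcal{S}\to\mathcal{H}$ with $d_C(x,y)=\|\Phi(x)-\Phi(y)\|^2$ for all $x,y\in\mathcal{S}$. For a clustering $\C=\{\C_1,\ldots,\C_K\}$ of the scalar inputs, the 1D objective $J(\C)=\sum_k\sum_{x\in\C_k} d_C(x,\mu_k)$ then coincides with the Euclidean k-means objective $\sum_k\sum_{x\in\C_k}\|\Phi(x)-\bar\Phi_k\|^2$ on $\Phi(\mathcal{S})\subset\mathcal{H}$, once the representative is taken to be the Hilbert-space centroid $\bar\Phi_k=\tfrac{1}{|\C_k|}\sum_{x\in\C_k}\Phi(x)$; the standard variance identity in Hilbert space shows that this centroid is optimal, so the infimum $\jopt$ is attained and matches the infimum over arbitrary representatives in $\mathcal{H}$.

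Next I would run \km{} seeding directly on the embedded points $\{\Phi(x_i)\}$: the first seed is drawn uniformly, and each subsequent seed $\Phi(x_j)$ is sampled with probability proportional to $\min_{c\in\text{seeds}}\|\Phi(x_j)-c\|^2$. Because at every stage the current seeds are themselves images of data points under $\Phi$, these squared distances collapse to $d_C$ evaluations, and more generally they can always be expressed purely through the CPD kernel $C$ (equivalently the kernel $K$), using the usual RKHS formula for squared distances. Hence the seeding step is implementable without explicit access to $\Phi$. After seeding, kernel k-means updates are applied; these can likewise be carried out entirely in terms of $K$, and each update is non-increasing in $J$.

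The heart of the argument is then to invoke the Arthur--Vassilvitskii analysis verbatim. Their bound $\E[J(\C_{\text{seed}})]\leq 8(\log K+2)\jopt$ relies only on (i) the Hilbert-space Pythagorean decomposition, (ii) the centroid identity $\sum_{x\in\C_k}\|\Phi(x)-c\|^2=\sum_{x\in\C_k}\|\Phi(x)-\bar\Phi_k\|^2+|\C_k|\,\|\bar\Phi_k-c\|^2$, and (iii) the triangle inequality for $\|\cdot\|_\mathcal{H}$. None of these steps uses finite-dimensionality or even separability of $\mathcal{H}$, so the bound transfers immediately. Since the subsequent kernel k-means iterations only decrease $J$, the final clustering $\C$ inherits the guarantee $\E[J(\C)]\leq 8(\log K+2)\jopt$.

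The main obstacle is a bookkeeping one: I need to verify carefully that every inequality in the Arthur--Vassilvitskii proof is genuinely dimension-free (in particular, that the probabilistic potential arguments go through for sampling points of $\Phi(\mathcal{S})$ rather than arbitrary points of $\mathcal{H}$), and that replacing explicit feature vectors by kernel evaluations changes no distance computation. A minor subtlety worth checking is whether the ``center'' $\mu_k$ used in the lemma's statement of $J$ is required to lie in $\mathcal{S}$ or only in $\mathcal{H}$; the natural reading, consistent with kernel k-means, is the latter, and that is the interpretation under which the bound holds with no extra loss.
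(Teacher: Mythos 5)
Your proposal is correct and follows essentially the same route as the paper's own (much terser) proof: embed via Schoenberg's theorem, run the \km{} seeding of Arthur--Vassilvitskii on the mapped points $\Phi(x)$ noting that their analysis depends only on squared distances and inner products and hence is dimension-free, and conclude since kernel k-means iterations only decrease the objective. Your additional care about implementability through kernel evaluations and about where the centers are allowed to live is sound and merely fills in details the paper leaves implicit.
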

\begin{proof}
Using $d_C(x,y) = \|\Phi(x) - \Phi(y)\|^2$, we can use the
initialization by \cite{dasv07} in the Hilbert space on the mapped
data points $\Phi(x)$, since it only depends on squared Euclidean
distances  or inner products, independent of the dimensionality of the
space. 
Finally, the objective function can always be
improved by running kernel kmeans starting from the kmeans++
initialization. 
\end{proof}

Together with Theorem~\ref{thm:mway}, Lemma~\ref{lem:hilbert1} directly
leads to a tensor clustering guarantee for Hilbertian metrics:
\begin{equation} \label{eq:hilbert3}
  \E[J(\C)] \leq 8m (\log K^* +2)\jopt(m), 
\end{equation} 
where  $K^* = \max_{1\leq j\leq m} k_j$ is the maximum number of clusters
  across all dimensions.

\subsubsection{2D factor for binary $\ell_1$ clustering}
Applying the results of \cite{coclus1} for binary matrices as
$\alpha_2$ yields the slightly stronger bound for $\ell_1$ tensor
clustering: 
\begin{equation*}
  J(\C) \leq 3^{\log_2(m) - 1}(1+\sqrt{2})\alpha_1\jopt(m).
\end{equation*}

\begin{table}[h]
  \centering
  \caption{Approximation guarantees for Tensor Clustering Algorithms. $K^*$
    denotes the maximum number of clusters, i.e., $K^* = \argmax_j k_j$; $c$ is
    some constant.}
  \label{tab:guaran}
  \begin{tabular}{l|l|l}
    Problem Name  & Approx. Bound & Proof\\
    \hline
    Metric tensor clustering & $J(\C) \leq m(1+\epsilon)\jopt(m)$ &
    Thm.~\ref{thm:mway} + \cite{ackerman}\\
    Bregman tensor clustering & $\E[J(\C)] \leq 8mc(\log K^* +2)\jopt(m)$
    &\eqref{eq:bounds}, Thm.~\ref{thm:mway} + \cite{ecmlbreg,kmeanstr} (using \cite{dasv07})\\
    Bregman tensor clustering  & $J(\C) \leq m \sigma_U\sigma_L^{-1}(1+\epsilon)\jopt(m)$
    &\eqref{eq:bounds}, Thm.~\ref{thm:mway} + \cite{acker2}\\
    Bregman co-clustering & Above two results with $m=2$ & as above\\
    Hilbertian metrics & $\E[J(\C)] \leq 8m (\log K^* +2)\jopt(m)$ &
    Thm.~\ref{thm:mway} + Lemma~\ref{lem:hilbert1}\\
    \hline
  \end{tabular}
\end{table}

\section{Experiments}
\label{sec:experiments}
Our bounds depend strongly on the approximation factor $\alpha_t$ of an underlying
$t$-dimensional clustering method. In our experiments, we study this close
dependence for $t=1$, wherein we compare the tensor clusterings arising from
different 1D methods of varying sophistication. Keep in mind that the
comparison of the 1D methods is to see their impact on the tensor clustering
built on top of them.

Our experiments reveal that the empirical approximation
factors are usually smaller than the theoretical bounds, and these factors
depend on statistical properties of the data. We also observe the linear
dependence of the \cotec\ objectives on the associated 1D objectives, as
suggested by Thm.~\ref{thm:mway} (for Euclidean) and Table~\ref{tab:guaran}
(2nd row, for KL-Divergence). 

Further comparisons show that in  practice, \cotec\ is competitive
with 
a greedy heuristic SiTeC (\textbf{Si}multaneous \textbf{Te}nsor
\textbf{C}lustering), which \emph{simultaneously} takes \emph{all}
dimensions into account, but lacks theoretical guarantees.
As expected, initializing SiTeC with \cotec\ yields lower final objective
values using fewer ``simultaneous'' iterations.

Regarding divergences, we focus on Euclidean distance and KL-divergence to
test \cotec.  To study the effect of the 1D method, we use two seeding methods
for each divergence, uniform and distance-based drawing. The latter seeding
ensures 1D approximation factors for $\E[J(\C)]$ by \cite{dasv07} for
Euclidean clustering and by \cite{ecmlbreg,kmeanstr} for KL-divergence.

We use each seeding by itself and as an initialization for k-means to get four
1D methods for each divergence. We refer to the \cotec\ combination of the
corresponding independent 1D clusterings by abbreviations:
\begin{description}
\setlength{\itemsep}{-2pt}
\item[r:] Randomly (uniformly) sample centers from the data points;
  assign each point to its closest center.
\item[s:] Sample centers using distance-specific
  seeding~\citep{dasv07,ecmlbreg,kmeanstr}; assign each point to its closest
  center.
\item[rk:] Initialize Euclidean or Bregman k-means with `\textbf{r}'.
\item[sk:] Initialize Euclidean or Bregman k-means with `\textbf{s}'.
\end{description}

The SiTeC method we compare to is the minimum sum-squared
residue co-clustering of~\cite{suv.coclus} for Euclidean distances in
2D, and a generalization of Algorithm~1 of~\cite{coclust}
for 3D and Bregman 2D clustering.
Additionally, we initialize  SiTeC
with the outcome of each of the four \cotec\ variants, which yields four
versions (of SiTeC), namely,

We compare the four versions of CoTeC to SiTeC, an algorithm without guarantees
that considers the groupings in all dimensions together. For Euclidean
distances in 2D, we use the minimum sum-squared residue co-clustering
of~\cite{suv.coclus} as SiTeC, while for Euclidean 3D and Bregman tensor clustering,
we generalize Algorithm~1 of~\cite{coclust}.  Initializing SiTeC with each
one of the above schemes results in another four variants:
\begin{description}
  \setlength{\itemsep}{-2pt}
\item[rc:] SiTeC initialized with the results of `\textbf{r}'
\item[sc:] SiTeC initialized with the results of `\textbf{s}'
\item[rkc:]  SiTeC initialized with the results of `\textbf{rk}'
\item[skc:]  SiTeC initialized with the results of `\textbf{sk}'
\end{description}
These variants inherit the
guarantees of \cotec, as they monotonically decrease the objective value.

\subsection{Experiments on synthetic data}
\label{sec:artificial-data}
For a controlled setting with synthetic data, we generate tensors $\mx{A}$ of
size $75 \times 75 \times 50$ and $75 \times 75$, for which we randomly choose a $5 \times 5 \times 5$
tensor of means $\mx{M}$ and cluster indicator matrices $\mx{C}_i \in \{0,1\}^{n_i
  \times 5}$.  For clustering with Euclidean distances we add Gaussian noise (from
${\cal N}(0,\sigma^2)$ with varying $\sigma$) to $\mx{A}$, while for KL-Divergences
we use the sampling method of~\cite{coclust} with varying noise.

For each noise-level to test, we repeat the 1D seeding 20 times on each of
five generated tensors and average the resulting 100 objective values.  To
estimate the approximation factor $\alpha_m$ on a tensor, we divide the achieved
objective $J(\C)$ by the objective value of the ``true'' underlying tensor
clustering.
Figure~\ref{fig:3dnoiP} shows the empirical approximation factor $\hat{\alpha}_m$ for
Euclidean distance and KL-Divergence. Qualitatively, the plots for tensors of
order 2 and 3 do not differ.

\begin{figure}
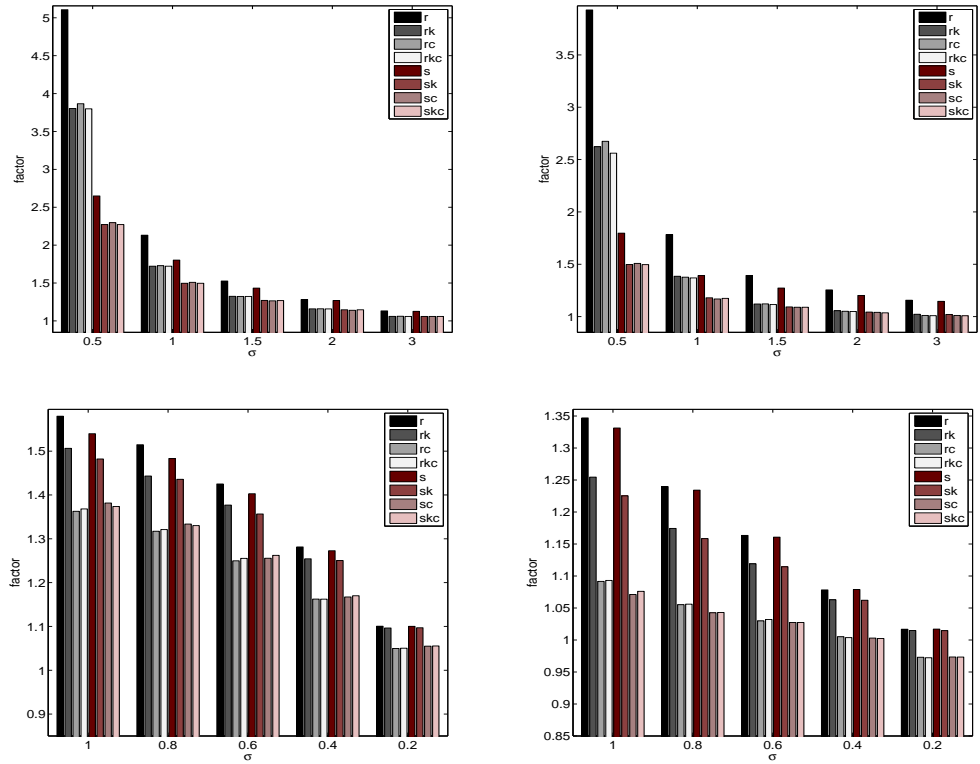

  \centering
  \includegraphics[height=2.1in,width=0.45\textwidth]{facArt3dbar_eucN.eps}
  \includegraphics[height=2.1in,width = 0.45\textwidth]{facArt2dbar_euc.eps}\\
  \includegraphics[height=2.1in,width = 0.45\textwidth]{facArt3dbar_dkl.eps}
  \includegraphics[height=2.1in,width = 0.45\textwidth]{facArt2dbar_dkl.eps}
\vspace{-8pt}
  \caption{\small Approximation factors for 3D clustering (left) and co-clustering
    (right) with increasing noise. Top row: Euclidean distances, bottom
    row: KL Divergence. The $x$ axis shows $\sigma$, the $y$ axis the
    empirical approximation factor.}
  \label{fig:3dnoiP}
\vspace{-2pt}
\end{figure}

In all settings, the empirical factor remains below the theoretical factor.
The reason for decreasing approximation factors with higher noise could be lower
accuracy of the estimates of $J(C)$ on the one hand, and more similar objective values
for all clusterings on the other hand.
With low
noise, distance-specific seeding \textbf{s} yields better results than uniform
seeding \textbf{r}, and adding k-means on top (\textbf{rk},\textbf{sk})
improves the results of both.  With Euclidean distances, CoTeC with
well-initialized 1D $k$-means (\textbf{sk}) competes with SiTeC. For
KL-divergence, though, SiTeC still improves on \textbf{sk}, and with high
noise levels,
1D $k$-means does not help: both \textbf{rk} and \textbf{sk} are as good as their
seeding only counterparts.

In summary, the empirical approximation factor does depend on the data, but
in general seems to be lower than the theoretical worst-case value.

\subsection{Experiments on real data}
We further assess the behavior of \cotec\ on a number of
real-world gene expression data sets\footnote{We thank Hyuk Cho for kindly
  providing us the preprocessed data.}.

The first three of our data sets, \emph{Bcell} ($1332 × 62$),
\emph{AllAml} ($2088 × 72$) and \emph{Breast} ($21906 × 77$) are
gene expression microarray data sets, and described in detail in
\cite{kluger03}.
\emph{Bcell} is a lymphoma microarray dataset of chronic lymphotic
leukemia, diffuse large Bcell leukemia and follicular lymphoma. During
preprocessing only those genes were selected whose minimum
expression level was above $e^{-1000}$. Microarray data for B-cell and
T-cell acute lymphocytic leukemia and acute myelogenous leukemia is
collected in \textit{AllAml}. Our data matrix is restricted to those genes
whose ratio of maximum to minimum  expression exceeds 10 and for whom
the difference between maximum and minimum expression was at least
1000. \textit{Breast} refers to breast cancer data. The gene selection was
the same as for \textit{Bcell}. 

The remaining two data sets are cancer microarray matrices
from~\citep{cho08}.%
\textit{Leukemia} ($3571 × 72$)  \cite{hyukref9} is data
from acute lymphoblastic leukemia or acute myeloid leukemia, 
and \textit{Mll} ($2474 × 72$) \cite{hyukref48}
includes data 
from three types of leukemia (ALL, AML, MLL).  

Even though the data sets have labeled column clusters, we do not compare
clustering results with the true labels, as the algorithm and its guarantees
hold merely for the clustering objective function, which may not exactly
agree with the true labels. Moreover, we aim for a co-clustering result
and not single-dimensional clusterings, and the labels are available 
for only one of the dimensions.

For each data set, we repeat the sampling of centers 30 times and average
the resulting objective values. Tables~\ref{tab:bcell} to
\ref{tab:mll} show detailed reults.
Panel (i) displays the objective value for the
simplest CoTeC, \textbf{r}, as a baseline, and the relative improvement
achieved by the other methods. The methods are encoded as \textbf{x},
\textbf{xk}, \textbf{xc}, \textbf{xkc}, where \textbf{x} stands for
\textbf{r} or \textbf{s}, depending on the row in the said table.

Overall, the improvements obtained via the approximation algorithm do depend
on the dataset under consideration and the number of clusters sought. In
general, the improvements are lower for the bispherically normalized data
(e.g., that of~\citep{cho08}) than for the other data sets.

For both distances, using 1D k-means on top of the seeding generally
improves on the combined co-clustering.
The combination method seems particularly competitive for Euclidean
distances. On the \emph{Bcell} data (Table~\ref{tab:bcell}), the
\textbf{s} variant of CoTeC (without k-means) can be \emph{as good as}
SiTeC  \textbf{r} initialization.  The distance-specific seeding (\textbf{s})
gains compared to uniform seeding as the 
clusters become smaller.  For \textit{Bcell} and \textit{Breast}
(Table~\ref{tab:bcell}), the combination of 1D k-means 
clusterings (\textbf{rk} and \textbf{sk}) slightly outperforms the
SiTeC variants \textbf{rc} and \textbf{sc}). 

Turning to KL Divergences, the impact of the 1D method varies with the
 data, as for Euclidean distance. Both 1D k-means and better seeding
mostly improve the overall outcome.
We observe the highest improvements on the \textit{AllAml}
data set. With KL Divergences, SiTeC is almost always at least a bit
better than CoTeC.

Besides improving the final result, a good initialization aids SiTeC in yet
another way: the average number of iterations it takes to converge
decreases, at times to even less than half the reference value.

Overall, the experiments demonstrate that the combination of good
single-dimensional clusterings can already lead to reasonable
co-clusterings in practice,
which can at times be as good as the result of a simultaneous biclustering
method. Used as an initialization, the CoTeC results improve the
outcome of SiTeC and reduce the number of ``simultaneous'' iterations.

\begin{table}[htbp]
  \caption{\small (i) Improvement of CoTeC and SiTeC variants upon
    `\textbf{r}' in \%; the respective reference value ($J_2$ for
    `\textbf{r}') is shaded in gray. 
    (ii) Average
    number of SiTeC iterations. } 
  \label{tab:bcell}
\vspace{1.5mm}
  \begin{tabular}{@{\hspace{-2pt}}l@{\hspace{1pt}}l}
    {\small
      \begin{tabular}{lc@{\hspace{2pt}}c@{\hspace{2pt}}c|rrrr}
        \multicolumn{8}{c}{\bf Bcell, Euc.}\\
      (i) &  $k_1$ & $k_2$ & & \cc{x} & \cc{xk} & \cc{xc} & \cc{xkc}\\ \cline{2-8} 
      &  5 & 3 & r &
        \gc{$6.00\cdot 10^5$} & $20.98$ & $18.37$ & $26.44$\\ 
      &  &  & s  & $8.52$ & $24.97$ & $22.83$ & $29.53$\\ \cline{2-8} 
      &  5 & 6 & r &  \gc{$5.94\cdot 10^5$} & $30.68$ & $26.09$ & $34.72$\\
      &  &  & s  & $16.97$ & $33.35$ & $32.06$ & $37.33$\\ \cline{2-8}  
      &  20 & 3 & r & \gc{$5.75\cdot 10^5$} & $31.66$ & $20.05$ & $33.05$\\ 
      &  &  & s  & $18.83$ & $32.24$ & $24.61$ & $33.36$\\ \cline{2-8} 
      &  20 & 6 & r & \gc{$5.56\cdot 10^5$} & $49.13$ & $35.26$ & $50.37$\\ 
      &  &  & s  & $34.97$ & $50.55$ & $43.93$ & $51.66$\\ \cline{2-8} 
      &  50 & 3 & r & \gc{$5.63\cdot 10^5$} & $31.10$ & $14.77$ & $31.76$\\ 
      &  &  & s  & $15.25$ & $32.58$ & $19.14$ & $33.17$\\ \cline{2-8} 
      &  50 & 6 & r & \gc{$5.18\cdot 10^5$} & $47.55$ & $34.63$ & $48.41$\\ 
      &  &  & s  & $36.22$ & $49.83$ & $43.77$ & $50.55$\\%
      \end{tabular}
    } &
    {\small
      \begin{tabular}{lc@{\hspace{2pt}}c@{\hspace{2pt}}c|rrrr}
        \multicolumn{8}{c}{\bf Bcell, KL}\\
     (i) &   $k_1$ & $k_2$ & & \cc{x} & \cc{xk} & \cc{xc} & \cc{xkc}\\ \cline{2-8} 
      &  5 & 3 & r & \hc{3.73}{-1} & $15.01$ & $20.87$ & $21.13$\\ 
      &  &  & s  & $1.53$ & $14.31$ & $20.43$ & $20.26$\\ \cline{2-8}  
      &  5 & 6 & r & \hc{3.60}{-1} & $15.76$ & $21.23$ & $21.62$\\ 
      &  &  & s  & $3.24$ & $16.22$ & $21.37$ & $21.21$\\ \cline{2-8} 
      &  20 & 3 & r & \hc{3.37}{-1} & $17.59$ & $22.23$ & $23.26$\\ 
      &  &  & s  & $10.54$ & $18.44$ & $22.99$ & $22.98$\\ \cline{2-8} 
      &  20 & 6 & r & \hc{3.15}{-1} & $18.62$ & $24.51$ & $25.43$\\ 
      &  &  & s  & $11.76$ & $20.52$ & $25.69$ & $26.23$\\ \cline{2-8} 
      &  50 & 3 & r & \hc{3.20}{-1} & $15.70$ & $20.12$ & $21.07$\\ 
      &  &  & s  & $9.61$ & $17.24$ & $20.85$ & $21.33$\\ \cline{2-8} 
      &  50 & 6 & r & \hc{2.85}{-1} & $16.38$ & $21.61$ & $22.57$\\ 
      &  &  & s  & $11.86$ & $18.63$ & $23.24$ & $23.13$\\
      \end{tabular}
    } \\
    \phantom{blank} & \phantom{line}\\
    {\small
      \hspace{1pt}
      \begin{tabular}[htbp]{@{\hspace{2pt}}l@{\hspace{2pt}}c@{\hspace{2pt}}c@{\hspace{2pt}}|@{\hspace{4pt}}r@{\hspace{8pt}}r@{\hspace{8pt}}r@{\hspace{8pt}}r}
    (ii) &    $k_1$ & $k_2$ & \multicolumn{1}{c}{rc} & \cc{rkc} & \cc{sc} & \cc{skc}\\ \cline{2-7}
     &   5 & 3& $ 11.9±3.3$& $ 3.3±0.7$& $ 6.1±2.8$& $ 3.5±0.7$\\
     &   5 & 6& $ 11.9±2.6$& $ 3.7±1.7$& $ 6.6±2.4$& $ 3.3±1.3$\\
     &   20 & 3& $ 7.0±1.4$& $ 2.0±0.2$& $ 3.9±1.0$& $ 2.2±0.5$\\
     &   20 & 6& $ 11.3±2.3$& $ 2.6±0.8$& $ 5.1±2.0$& $ 2.7±0.7$\\
     &   50 & 3& $ 6.2±1.9$& $ 2.0±0.0$& $ 3.5±2.0$& $ 2.0±0.0$\\
     &   50 & 6& $ 8.1±2.1$& $ 2.1±0.3$& $ 4.1±1.6$& $ 2.0±0.0$\\
      \end{tabular}
    } &
    {\small
      \begin{tabular}[htbp]{@{\hspace{2pt}}l@{\hspace{2pt}}c@{\hspace{2pt}}c@{\hspace{2pt}}|@{\hspace{4pt}}r@{\hspace{8pt}}r@{\hspace{8pt}}r@{\hspace{8pt}}r}
     (ii) &   $k_1$ & $k_2$ & \cc{rc} & \cc{rkc} & \cc{sc} & \cc{skc}\\ \cline{2-7}
      &  5 & 3& $ 10.1±3.0$& $ 7.2±3.0$& $ 11.1±4.3$& $ 7.2±3.5$\\
      &  5 & 6& $ 10.8±3.1$& $ 8.1±3.4$& $ 8.7±2.9$& $ 6.8±3.3$\\
      &  20 & 3& $ 10.6±2.8$& $ 7.5±2.0$& $ 7.4±1.8$& $ 7.0±2.2$\\
      &  20 & 6& $ 12.6±3.4$& $ 8.8±2.9$& $ 8.4±2.1$& $ 8.1±2.0$\\
      &  50 & 3& $ 9.1±2.3$& $ 6.2±1.3$& $ 6.9±1.8$& $ 6.0±1.3$\\
      &  50 & 6& $ 10.5±1.8$& $ 7.7±2.1$& $ 8.1±2.3$& $ 6.9±1.0$\\
      \end{tabular}
    }
    \end{tabular}
    \vskip 10pt
  \begin{tabular}{@{\hspace{-2pt}}l@{\hspace{1pt}}l}
    {\small
      \begin{tabular}{lc@{\hspace{2pt}}c@{\hspace{2pt}}c|rrrr}
        \multicolumn{8}{c}{\bf Breast, Euc}\\ 
 (i) &       $k_1$ & $k_2$ & & \cc{x} & \cc{xk} & \cc{xc} & \cc{xkc}\\ \cline{2-8}
     &   5 & 2 & r & \hc{1.43}{5} & $22.96$ & $20.48$ & $24.47$\\ 
     &   &  & s  & $2.69$ & $21.92$ & $19.42$ & $24.32$\\ \cline{2-8} 
     &   5 & 4 & r & \hc{1.42}{5} & $26.49$ & $25.85$ & $27.30$\\ 
     &   &  & s  & $10.38$ & $26.72$ & $26.67$ & $27.95$\\ \cline{2-8}
     &   10 & 2 & r & \hc{1.41}{5} & $22.13$ & $15.46$ & $25.26$\\ 
     &   &  & s  & $7.77$ & $21.66$ & $19.20$ & $25.09$\\ \cline{2-8}
     &   10 & 4 & r & \hc{1.37}{5} & $26.36$ & $24.09$ & $28.93$\\ 
     &   &  & s  & $9.79$ & $26.87$ & $26.44$ & $29.90$\\ \cline{2-8}
     &   20 & 2 & r & \hc{1.41}{5} & $22.46$ & $10.42$ & $26.21$\\ 
     &   &  & s  & $8.16$ & $22.54$ & $19.43$ & $26.16$\\ \cline{2-8}
     &   20 & 4 & r & \hc{1.37}{5} & $27.95$ & $23.44$ & $31.71$\\ 
     &   &  & s  & $10.55$ & $28.31$ & $25.83$ & $32.44$\\ 
      \end{tabular}
    } &
    {\small
      \begin{tabular}{lc@{\hspace{2pt}}c@{\hspace{2pt}}c|rrrr}
        \multicolumn{8}{c}{\bf Breast, KL}\\
   (i) &     $k_1$ & $k_2$ & & \cc{x} & \cc{xk} & \cc{xc} & \cc{xkc}\\ \cline{2-8}
     &   5 & 2 & r & \hc{2.70}{-2} & $8.08$ & $12.81$ & $12.23$\\ 
     &   &  & s  & $1.77$ & $7.98$ & $13.19$ & $12.38$\\ \cline{2-8} 
     &   5 & 4 & r & \hc{2.67}{-2} & $11.88$ & $17.56$ & $17.31$\\ 
     &   &  & s  & $3.60$ & $11.95$ & $18.10$ & $18.29$\\ \cline{2-8} 
     &   10 & 2 & r & \hc{2.66}{-2} & $8.01$ & $11.44$ & $12.37$\\ 
     &   &  & s  & $2.45$ & $7.96$ & $12.34$ & $12.46$\\ \cline{2-8} 
     &   10 & 4 & r & \hc{2.59}{-2} & $11.17$ & $16.54$ & $17.92$\\ 
     &   &  & s  & $4.97$ & $13.53$ & $19.50$ & $19.31$\\ \cline{2-8} 
     &   20 & 2 & r & \hc{2.63}{-2} & $6.27$ & $9.72$ & $9.95$\\ 
     &   &  & s  & $2.93$ & $8.78$ & $11.69$ & $11.61$\\ \cline{2-8} 
     &   20 & 4 & r & \hc{2.56}{-2} & $11.73$ & $17.42$ & $17.78$\\ 
     &   &  & s  & $3.45$ & $12.21$ & $17.51$ & $17.45$\\ 
      \end{tabular}
    }\\
    \phantom{blank} & \phantom{line}\\
    {\small
      \begin{tabular}[htbp]{@{\hspace{2pt}}l@{\hspace{2pt}}c@{\hspace{2pt}}c@{\hspace{2pt}}|@{\hspace{4pt}}r@{\hspace{8pt}}r@{\hspace{8pt}}r@{\hspace{8pt}}r}
 (ii) &   $k_1$ & $k_2$ & \cc{rc} & \cc{rkc} & \cc{sc} & \cc{skc}\\         \cline{2-7}
   &   5 & 2& $ 4.6±2.4$& $ 1.2±0.4$& $ 4.0±1.6$& $ 1.8±0.4$\\
     &   5 & 4& $ 4.9±1.8$& $ 1.0±0.2$& $ 3.0±0.9$& $ 1.2±0.5$\\
     &   10 & 2& $ 3.4±1.4$& $ 2.0±0.2$& $ 2.6±1.0$& $ 2.0±0.0$\\
     &   10 & 4& $ 4.3±1.8$& $ 2.0±0.5$& $ 3.0±0.9$& $ 2.1±0.3$\\
     &   20 & 2& $ 2.9±1.3$& $ 2.0±0.0$& $ 2.7±1.0$& $ 2.0±0.0$\\
     &   20 & 4& $ 3.9±1.3$& $ 2.1±0.3$& $ 3.4±1.8$& $ 2.0±0.2$\\
      \end{tabular}      
    } & 
    {\small
      \begin{tabular}{@{\hspace{2pt}}l@{\hspace{2pt}}c@{\hspace{2pt}}c@{\hspace{2pt}}|@{\hspace{4pt}}r@{\hspace{8pt}}r@{\hspace{8pt}}r@{\hspace{8pt}}r}
   (ii) &     $k_1$ & $k_2$ & \cc{rc} & \cc{rkc} & \cc{sc} & \cc{skc}\\ \cline{2-7}
     &   5 & 2& $ 5.2±2.0$& $ 3.6±2.0$& $ 4.9±2.6$& $ 3.1±1.8$\\
     &   5 & 4& $ 5.6±1.8$& $ 3.6±1.9$& $ 4.4±1.2$& $ 3.5±1.4$\\
     &   10 & 2& $ 4.0±1.8$& $ 2.5±1.0$& $ 4.4±2.8$& $ 2.7±1.7$\\
     &   10 & 4& $ 5.1±1.4$& $ 4.0±1.7$& $ 5.2±1.7$& $ 3.7±1.3$\\
     &   20 & 2& $ 3.6±1.8$& $ 2.3±0.9$& $ 3.2±1.5$& $ 2.1±0.5$\\
     &   20 & 4& $ 5.2±1.9$& $ 3.5±1.8$& $ 4.3±1.6$& $ 2.8±1.2$\\
      \end{tabular}
    }
  \end{tabular}
\vspace{-6pt}
\end{table}

\begin{table}[htbp]
\caption{\small (i) Improvement of CoTeC and SiTeC variants upon
    `\textbf{r}' in \%; the respective reference value ($J_2$ for
    `\textbf{r}') is shaded in gray. 
    (ii) Average
    number of SiTeC iterations. } 
\vspace{1.5mm}

  \begin{tabular}{@{\hspace{-2pt}}l@{\hspace{1pt}}l}
    {\small
    \begin{tabular}{lc@{\hspace{2pt}}c@{\hspace{2pt}}c|rrrr}
      \multicolumn{8}{c}{\bf AllAml, Euc.}\\ 
   (i) &     $k_1$ & $k_2$ & & \cc{x} & \cc{xk} & \cc{xc} & \cc{xkc}\\
   \cline{2-8}
  &    5 & 3 & r & \hc{6.06}{11} & $49.26 $ & $49.19 $ & $50.54 $\\ 
  &    &  & s  & $40.63 $ & $48.62 $ & $50.27 $ & $50.71 $\\ \cline{2-8} 
  &    10 & 3 & r & \hc{5.31}{11} & $47.02 $ & $47.01 $ & $48.69 $\\ 
  &    &  & s  & $40.83 $ & $48.65 $ & $49.51 $ & $50.10 $\\ \cline{2-8}
  &    20 & 3 & r & \hc{4.37}{11} & $39.75 $ & $38.02 $ & $41.78 $\\ 
  &    &  & s  & $34.26 $ & $41.06 $ & $42.70 $ & $43.28 $\\ 
    \end{tabular}
  } & 
    {\small
      \begin{tabular}{lc@{\hspace{2pt}}c@{\hspace{2pt}}c|rrrr}
        \multicolumn{8}{c}{\bf AllAml, KL}\\
   (i) &     $k_1$ & $k_2$ & & \cc{x} & \cc{xk} & \cc{xc} & \cc{xkc}\\ \cline{2-8}
     &   5 & 3 & r & \hc{5.64}{-1} & $43.92 $ & $47.14 $ & $46.73 $\\ 
     &   &  & s  & $33.39 $ & $43.12 $ & $46.68 $ & $46.44 $\\ \cline{2-8} 
     &   10 & 3 & r & \hc{4.67}{-1} & $40.11 $ & $41.72 $ & $42.57 $\\ 
     &   &  & s  & $31.04 $ & $39.87 $ & $42.78 $ & $42.75 $\\ \cline{2-8} 
     &   20 & 3 & r & \hc{3.78}{-1} & $29.29 $ & $32.67 $ & $33.24 $\\ 
     &   &  & s  & $20.58 $ & $29.74 $ & $33.90 $ & $34.07 $\\ 
      \end{tabular}
    }\\
    \phantom{blank} & \phantom{line}\\
    {\small
    \begin{tabular}[htbp]{@{\hspace{2pt}}l@{\hspace{2pt}}c@{\hspace{2pt}}c@{\hspace{2pt}}|@{\hspace{4pt}}r@{\hspace{8pt}}r@{\hspace{8pt}}r@{\hspace{8pt}}r}
      (ii) &   $k_1$ & $k_2$ & \cc{rc} & \cc{rkc} & \cc{sc} & \cc{skc}\\ \cline{2-7}
     & 5 & 3& $ 13.8±3.7$& $ 2.8±1.2$& $ 5.2±1.7$& $ 3.0±1.4$\\
     & 10 & 3& $ 15.9±4.6$& $ 3.4±1.3$& $ 4.8±1.2$& $ 2.9±1.0$\\
     & 20 & 3& $ 12.3±3.4$& $ 2.9±1.3$& $ 4.7±1.3$& $ 2.9±0.9$\\
    \end{tabular} } & 
    {\small
      \begin{tabular}{@{\hspace{2pt}}l@{\hspace{2pt}}c@{\hspace{2pt}}c@{\hspace{2pt}}|@{\hspace{4pt}}r@{\hspace{8pt}}r@{\hspace{8pt}}r@{\hspace{8pt}}r}
      (ii) &   $k_1$ & $k_2$ & \cc{rc} & \cc{rkc} & \cc{sc} & \cc{skc}\\ \cline{2-7}
      &  5 & 3& $ 17.9±3.5$& $ 7.0±3.5$& $ 11.7±5.0$& $ 7.8±4.1$\\
      &  10 & 3& $ 18.3±3.4$& $ 7.2±2.6$& $ 12.1±3.5$& $ 9.3±4.6$\\
      &  20 & 3& $ 18.9±2.5$& $ 12.0±4.5$& $ 11.1±3.1$& $ 10.3±2.9$\\
      \end{tabular}
    }
  \end{tabular}

\vskip 10pt

  \begin{tabular}{@{\hspace{-2pt}}l@{\hspace{1pt}}l}
    {\small
      \begin{tabular}{lc@{\hspace{2pt}}c@{\hspace{2pt}}c|rrrr}
      \multicolumn{8}{c}{\bf Leukemia, Euc.}\\
   (i) &     $k_1$ & $k_2$ & & \cc{x} & \cc{xk} & \cc{xc} & \cc{xkc}\\
   \cline{2-8}
   &   3 & 2 & r & \hc{7.61}{4} & $5.48 $ & $5.77 $ & $6.74 $\\ 
   &   &  & s  & $0.17 $ & $5.54 $ & $5.73 $ & $6.78 $\\ \cline{2-8} 
   &   3 & 3 & r & \hc{7.57}{4} & $6.53 $ & $7.18 $ & $7.75 $\\ 
   &   &  & s  & $0.14 $ & $6.79 $ & $6.77 $ & $7.79 $\\ \cline{2-8} 
   &   50 & 2 & r & \hc{7.30}{4} & $3.79 $ & $5.97 $ & $7.25 $\\ 
   &   &  & s  & $0.33 $ & $3.75 $ & $5.54 $ & $7.25 $\\ \cline{2-8} 
   &   50 & 3 & r & \hc{7.15}{4} & $4.90 $ & $7.34 $ & $8.93 $\\ 
   &   &  & s  & $0.60 $ & $5.00 $ & $8.00 $ & $9.06 $\\ \cline{2-8} 
   &   75 & 2 & r & \hc{7.26}{04} & $3.66 $ & $5.67 $ & $6.89 $\\ 
   &   &  & s  & $0.02 $ & $3.67 $ & $5.23 $ & $6.88 $\\ \cline{2-8} 
   &   75 & 3 & r & \hc{7.09}{4} & $4.59 $ & $7.09 $ & $8.47 $\\ 
   &   &  & s  & $0.60 $ & $4.61 $ & $7.05 $ & $8.52 $\\ 
    \end{tabular}
  } & 
  {\small
     \begin{tabular}{lc@{\hspace{2pt}}c@{\hspace{2pt}}c|rrrr}
       \multicolumn{8}{c}{\bf Leukemia, KL}\\
   (i) &     $k_1$ & $k_2$ & & \cc{x} & \cc{xk} & \cc{xc} & \cc{xkc}\\ \cline{2-8}
     &  3 & 2 & r & \hc{1.82}{-1} & $5.11 $ & $7.15 $ & $7.52 $\\ 
     &  &  & s  & $0.36 $ & $4.93 $ & $7.19 $ & $7.51 $\\ \cline{2-8} 
     &  3 & 3 & r & \hc{1.81}{-1} & $6.00 $ & $8.13 $ & $8.76 $\\ 
     &  &  & s  & $0.44 $ & $6.08 $ & $8.18 $ & $8.76 $\\ \cline{2-8} 
     &  50 & 2 & r & \hc{1.71}{-1} & $3.81 $ & $7.58 $ & $7.60 $\\ 
     &  &  & s  & $-0.21 $ & $3.65 $ & $7.32 $ & $7.35 $\\ \cline{2-8} 
     &  50 & 3 & r & \hc{1.68}{-1} & $4.74 $ & $9.31 $ & $9.35 $\\ 
     &  &  & s  & $1.08 $ & $5.16 $ & $9.70 $ & $9.75 $\\ \cline{2-8} 
     &  75 & 2 & r & \hc{1.71}{-1} & $3.36 $ & $6.92 $ & $6.95 $\\ 
     &  &  & s  & $-0.35 $ & $2.85 $ & $6.60 $ & $6.30 $\\ \cline{2-8} 
     &  75 & 3 & r & \hc{1.66}{-1} & $4.48 $ & $9.04 $ & $9.11 $\\ 
     &  &  & s  & $0.69 $ & $4.25 $ & $8.66 $ & $8.68 $\\ 
     \end{tabular}
   }\\
   \phantom{blank} & \phantom{line}\\
    {\small
      \begin{tabular}{@{\hspace{2pt}}l@{\hspace{2pt}}c@{\hspace{2pt}}c@{\hspace{2pt}}|@{\hspace{4pt}}r@{\hspace{8pt}}r@{\hspace{8pt}}r@{\hspace{8pt}}r}
      (ii) &   $k_1$ & $k_2$ & \cc{rc} & \cc{rkc} & \c{sc} & \cc{skc}\\ \cline{2-7}
   &   3 & 2& $ 3.8±1.3$& $ 2.0±0.0$& $ 3.3±0.8$& $ 2.0±0.0$\\
   &   3 & 3& $ 4.5±1.5$& $ 2.2±0.4$& $ 3.8±1.1$& $ 2.1±0.3$\\
   &   50 & 2& $ 3.3±1.1$& $ 2.0±0.0$& $ 2.9±1.3$& $ 2.0±0.0$\\
   &   50 & 3& $ 3.3±0.8$& $ 2.0±0.0$& $ 3.7±1.1$& $ 2.0±0.0$\\
   &   75 & 2& $ 3.1±0.9$& $ 2.0±0.0$& $ 3.3±1.1$& $ 2.0±0.0$\\
   &   75 & 3& $ 3.6±0.9$& $ 2.0±0.0$& $ 3.4±1.0$& $ 2.0±0.0$\\
     \end{tabular}
   } & 
   {\small
     \begin{tabular}{@{\hspace{2pt}}l@{\hspace{2pt}}c@{\hspace{2pt}}c@{\hspace{2pt}}|@{\hspace{4pt}}r@{\hspace{8pt}}r@{\hspace{8pt}}r@{\hspace{8pt}}r}
      (ii) &   $k_1$ & $k_2$ & \cc{rc} & \cc{rkc} & \cc{sc} & \cc{skc}\\ \cline{2-7}
      & 3 & 2& $ 7.6±3.5$& $ 4.5±3.2$& $ 8.0±2.9$& $ 4.6±3.2$\\
      & 3 & 3& $ 7.4±2.5$& $ 5.1±1.7$& $ 7.3±3.0$& $ 4.7±1.4$\\
      & 50 & 2& $ 5.4±1.8$& $ 3.4±0.7$& $ 5.7±2.5$& $ 3.3±0.5$\\
      & 50 & 3& $ 6.2±2.0$& $ 4.5±0.8$& $ 5.5±1.0$& $ 4.6±1.1$\\
      & 75 & 2& $ 5.3±1.8$& $ 3.4±1.2$& $ 5.6±2.2$& $ 3.2±0.5$\\
      & 75 & 3& $ 5.6±1.4$& $ 4.2±0.6$& $ 4.9±1.1$& $ 4.1±0.3$\\
     \end{tabular}
   }
 \end{tabular}
  \label{tab:leukemia}
\end{table}

\begin{table}[htbp]
\caption{\small (i) Improvement of CoTeC and SiTeC variants upon
    `\textbf{r}' in \%; the respective reference value ($J_2$ for
    `\textbf{r}') is shaded in gray. 
    (ii) Average number of SiTeC iterations. } 
  \vspace{1.5mm}

    \centering
    {\small
      \begin{tabular}{@{\hspace{2pt}}l@{\hspace{2pt}}c@{\hspace{2pt}}c@{\hspace{2pt}}c|rrrr}
        \multicolumn{8}{c}{\bf Mll, Euc.}\\
   (i) &     $k_1$ & $k_2$ & & \cc{x} & \cc{xk} & \cc{xc} & \cc{xkc}\\ \cline{2-8}
       & 3 & 3 & r & \hc{6.52}{4} & $10.54 $ & $11.26 $ & $11.45 $\\ 
       & &  & s  & $1.41 $ & $10.62 $ & $11.20 $ & $11.46 $\\ \cline{2-8} 
      &  50 & 3 & r & \hc{5.83}{4} & $8.40 $ & $12.53 $ & $13.21 $\\ 
      &  &  & s  & $1.12 $ & $8.23 $ & $12.35 $ & $13.17 $\\ \cline{2-8} 
      &  75 & 3 & r & \hc{5.75}{4} & $7.84 $ & $11.69 $ & $12.52 $\\ 
      &  &  & s  & $0.84 $ & $7.86 $ & $11.68 $ & $12.52 $\\
      \end{tabular}
      \begin{tabular}[htbp]{l@{\hspace{2pt}}c@{\hspace{2pt}}c@{\hspace{2pt}}|@{\hspace{4pt}}r@{\hspace{8pt}}r@{\hspace{8pt}}r@{\hspace{8pt}}r}
      (ii) &   $k_1$ & $k_2$ & \cc{rc} & \cc{rkc} & \cc{sc} & \cc{skc}\\ \cline{2-7}
      &  3 & 3& $ 4.2±1.2$& $ 2.0±0.3$& $ 3.8±1.0$& $ 2.0±0.5$\\
      &  50 & 3& $ 4.7±1.9$& $ 2.0±0.0$& $ 4.2±1.5$& $ 2.1±0.3$\\
      &  75 & 3& $ 4.4±1.4$& $ 2.0±0.0$& $ 4.3±1.5$& $ 2.0±0.0$\\
    \end{tabular}
  }
  \label{tab:mll}
\end{table}

\section{Conclusions}
In this paper we presented a simple, and to our knowledge the first
approximation algorithm for Bregman and metric tensor clustering. Our
approximation factor grows linearly with the 
order $m$ of the tensor for Bregman divergences, and is slightly
superlinear in $m$ for arbitrary metrics. It is always linear in the
quality of the sub-clusterings.

Our experiments demonstrated the dependence of the multi-dimensional
clustering on the single-dimensional clusterings, confirming the dependence
stated in the theoretical bound. On real-world data, the approximation
algorithm is also suitable as an initialization for a simultaneous
co-clustering algorithm, and endows the latter with its approximation
guarantees. In fact the approximation algorithm by itself can also yield
reasonable results in practice.

In our experiments we used single-dimensional clusterings with guarantees
for our overall approximation algorithm. An interesting direction for future
work is the development of a simultaneous approximation algorithm, such as a
specific co-clustering seeding scheme of multi-dimensional centers, which
can be then used as a subroutine by our tensor clustering algorithm.

\bibliographystyle{plainnat}
\bibliography{paper}

\end{document}